\newtheorem{theorem}{Theorem}[section]
\newtheorem{lemma}{Lemma}[section]
\theoremstyle{Definition}
\newtheorem{definition}{Definition}[section]
\theoremstyle{remark}
\newtheorem{remark}[theorem]{Remark}
\numberwithin{equation}{section}
\begin{document}

\begin{flushleft}
 {\bf\Large { Uncertainty Principles for the  Short-time Non-separable Linear Canonical Transform }}

\parindent=0mm \vspace{.2in}

{\bf{M. Younus Bhat$^{1},$ and Aamir H. Dar$^{2}$ }}
\end{flushleft}

{{\it $^{1}$ Department of  Mathematical Sciences,  Islamic University of Science and Technology Awantipora, Pulwama, Jammu and Kashmir 192122, India.E-mail: $\text{g gyounusg@gmail.com}$}}

{{\it $^{2}$ Department of  Mathematical Sciences,  Islamic University of Science and Technology Awantipora, Pulwama, Jammu and Kashmir 192122, India.E-mail: $\text{ahdkul740@gmail.com}$}}

\begin{quotation}
\noindent
{\footnotesize {\sc Abstract.} The free metaplectic transformation (FMT) or the nonseparable linear canonical transformation (NSLCT) has gained much popularity in recent times because of its various application in signal processing, paraxial optical systems, digital algorithms, optical encryption and so on. However, the NSLCT is
inadequate for localized analysis of non-transient
signals, as such, it is
imperative to introduce a unique localized transform coined as the short-time  nonseparable linear canonical transformation (ST-NSLCT). In
this paper, we investigate the ST-NSLCT. Firstly, we propose the definition of
the ST-NSLCT, and provide the time-frequency analysis of the proposed transform in
the NSLCT domain. Secondly, we investigate the basic properties of the proposed
transform including the reconstruction formula,
Moyal's formula. The emergence of the ST-NSLCT definition and its properties
broadens the development of time-frequency representation of higher-dimensional
signals theory to a certain extent. Finally, we extend some different uncertainty principles (UP) from quantum mechanics
including Lieb's inequality, Pitt's inequality, Hausdorff-Young inequality,  Heisenberg’s uncertainty principle,  Hardy’s
UP, Beurling’s UP, Logarithmic UP, and Nazarov’s UP.\\

{ Keywords:} Short-time non-seperable transformation; Moyals formula; Uncertainty Principle; Nazarov’s UP; Hardy's UP; Logarithmic’s UP .\\
\noindent
\textit{2000 Mathematics subject classification: } 47B38; 42B10; 70H15; 42C40;44A35.}
\end{quotation}

\section{Introduction} \label{sec intro}
\noindent

The free metaplectic transformation (FMT) also known as the nonseparable linear canonical transformation (NSLCT) is an n-dimensional linear canonical transformation (LCT) first studied in \cite{ND1}, is widely
used in many fields such as filter design, pattern recognition, optics and analyzing the propagation
of electromagnetic waves \cite{ND2,ND3,ND4,ND5}. The theory of NSLCT involving a general $2n\times 2n$ real, symplectic matrix ${\bf M}=(A,B:C,D)$ with $n(2n+1)$ degrees
of freedom \cite{WD3,WD4}. The NSLCT embodies several signal processing tools ranging from the classical
Fourier, Fresnel transform, and even the fundamental operations of quadratic phase factor multiplication \cite{WD1,WD2,OWNb}. The metaplectic operator or the NSLCT of any function $f\in L^2(\mathbb R^n)$ with the real free symplectic
matrix ${\bf M}=(A,B:C,D)$ is given by \cite{WD2,NDself}

\begin{equation}\label{eqn mul lct A}
 \mathcal L_{\bf M}[f](w)=\int_{\mathbb R^n}f(x)\mathcal K_M(x,w)dx
\end{equation}
where $\mathcal K_{\bf M}(x,w)$ denotes the kernel and is given by

\begin{equation}\label{kera}
\mathcal K_{\bf M}(x,w)=\frac{1}{(2\pi)^{n/2}\sqrt{|det(B)|}}e^{\frac{i}{2}(w^TDB^{-1}w-2w^TB^{-T}x+x^TB^{-1}Ax)},\quad x,w\in\mathbb R^n, |det (B)|\ne 0.
\end{equation}
The  arbitrary real
parameters involved in (\ref{eqn mul lct A}) are of great importance   for the efficient analysis of the inescapable signals i.e. the chirp-like  signals. Due to the
extra degrees of freedom, the NSLCT  has been successfully employed in diverse
problems arising in various branches of science and engineering, such as harmonic analysis,
 optical systems, reproducing kernel Hilbert spaces, quantum mechanics,image
processing, sampling and so on \cite{NSWL6,OWNc,OWNd}. However, NSLCT
has a drawback. Due to its global kernel it is not suitable for processing the signals
with varying frequency content \cite{OWN1,OWN2}. The short-time non-separable linear canonical transform (ST-NSLCT)
with a local window function can efficiently localize the frequency spectrum of non-transient signals in the
non-separable LCT domain, hence overcomes this drawback. Taking this opportunity, our goal is to introduce the notion of short-time
non-separable linear canonical transform, which
is a generalized version of NSLCT and endowed with higher degrees of freedom resulting in an efficient localized analysis of chirp signals.

Let us now move to the side of uncertainty inequality. Uncertainty principle was introduced by German physicists Heisenberg \cite{10own} in 1927 which is known as the heart of any
ignal processing tool. With the passage of time researchers further extended the uncertainty principle to different types of new uncertainty principles associated with the Fourier
transform, for instance Heisenberg’s uncertainty principle, Logarithmic uncertainty principle, Hardy’s uncertainty principle and Beurling’s uncertainty principle[see\cite{ND7}-\cite{ND16}]. Later these
uncertainty principles were extended to LCT and its generalized  domains \cite{ND17}-\cite{OWN4}. In \cite{NDself} authors proposed
uncertainty principles associated with the NSLCT and in \cite{NSLCWT} authors establish
uncertainty principles for the non-separable linear canonical wavelet transform. Keeping in view the fact that the theory of ST-NSLCT and associated uncertainty principles is yet to be investigated exclusively;
therefore, it is both theoretically interesting and practically useful to study the properties of ST-NSLCT and  formulate some new uncertainty inequalities
pertaining to it. \\

The highlights of the article are pointed out below:\\
\begin{itemize}

\item To introduce a novel integral transform coined as the short-time non-separable linear canonical transform.\\

\item To study the fundamental properties of the proposed transform, including the Moyal's formula, boundedness and inversion formula.\\

\item To establish the Pitt’s inequality, Lieb inequality   and Hausdorff-Young inequality
associated with the ST-NSLCT.\\

\item To formulate the Heisenberg's, Logarithmic and Nazarov’s uncertainty principles.\\

\item To formulate the Hardy's and Beurling's uncertainty principles for the ST-NSLCT.\\

\end{itemize}

The paper is organized as follows: In Section \ref{sec 2}, we discuss some preliminary results and definitions which will be used in subsequent sections. In section \ref{sec 3}, we formally introduced  the definition of short-time non-separable linear canonical transform (ST-NSLCT). Then we investigated  several basic properties of the ST-NSLCT  which are important for signal representation in signal processing.  In Section \ref{sec 4}, we extend some different uncertainty principles (UP's) from quantum mechanics
including Lieb's, Pitt's UP, Heisenberg’s uncertainty principle, Hausdorff-Young,  Hardy’s
UP, Beurling’s UP, Logarithmic UP, and Nazarov’s UP which have already been well studied in the
NSLCT(FMT) domain.

\section{Preliminary}\label{sec 2}

This section give some useful definitions and lemmas about the multi-dimensional Fourier transform and  non-separable linear canonical transform.
\subsection{\bf Multi-dimensional Fourier transform (FT)}
For any $f\in L^2(\mathbb R^n)$ the n-dimensional
Fourier transform (FT) of $f(x)$ is given by \cite{NDself}
\begin{equation}\label{n-ft 1}
\mathcal F[f](w)=\frac{1}{(2\pi)^{n/2}}\int_{\mathbb R^n}f(x)e^{-iw^Tx}dx,
\end{equation}

and its inversion is given by
\begin{equation}\label{inv-ft 1}
 f(x)=\frac{1}{(2\pi)^{n/2}}\int_{\mathbb R^n}\mathcal F[f](w)e^{iw^Tx}dw
\end{equation}
where $x=(x_1,x_2,...,x_n)^T\in\mathbb R^n$, $w=(w_1,w_2,...,w_n)^T\in\mathbb R^n.$\\
Based on the definition of short-time Fourier transform (STFT)\cite{bah}, we can define n-dimensional STFT as:

\begin{definition}\label{def STFT} Let $\phi$ be a window function in $L^2(\mathbb R^n)$, then
for any $f\in L^2(\mathbb R^n)$ the n-dimensional short-time
Fourier transform (STFT) of $f(x)$ with respect to the
window function $\phi$ is given by \cite{NSLCWT}
\begin{equation}\label{n-stft 1}
\mathcal V_{\phi}[f](w,u)=\frac{1}{(2\pi)^{n/2}}\int_{\mathbb R^n}f(x)\overline{\phi(x-u)}e^{-iw^Tx}dx.
\end{equation}
\end{definition}

\subsection{\bf The non-separable linear canonical transform }
For typographical convenience, we shall denote a $2n\times 2n$ matrix ${\bf M}= \left(
\begin{array}{cc}
A & B \\
C & D \\
\end{array}
\right)$ as  ${\bf M}=(A,B:C,D),$ where $A$, $B$, $C$ and $D$ are real $n\times n$ sub-matrices. Moreover, we recall that the matrix $\bf M$  is said to be a free
symplectic matrix if ${\bf M}^T\bf \Omega M=\bf \Omega,$ and $|det(B)|\ne 0$, where ${\bf \Omega}=({0,I_n:-I_n,0})$ and $ I_n$ represents the $n\times n$ identity matrix. The sub-matrices of ${\bf M}$ satisfying the following constraints:
\begin{equation*}
AB^T=BA^T,\quad CD^T=DC^T,\quad AD^T-BC^T=I_n.
\end{equation*}
The transpose of ${\bf M}=\left(
\begin{array}{cc}
A & B \\
C & D \\
\end{array}
\right)$ is given by ${\bf M}^T=\left(
\begin{array}{cc}
A^T & C^T \\
B^T & D^T \\
\end{array}
\right).$ \\
Also inverse of the free
symplectic matrix is given by ${\bf M}^{-1}=\left(
\begin{array}{cc}
D^T & -B^T \\
-C^T & A^T \\
\end{array}
\right).$
It is clear that $\bf M M^{-1}=\left(
\begin{array}{cc}
I_n & 0 \\
0 & I_n \\
\end{array}
\right)$.

\begin{definition}[NSLCT]\cite{ND12}\label{def mul lct}The non-separable
linear canonical transform or the  free metaplectic transform of a function $f\in L^2(\mathbb R^n)$ with respect $2n\times 2n $ real, free  symplectic matrix ${\bf M}=(A,B: C,D)$ (with $|det( B)| \ne 0$) is defined by 
\end{definition}
\begin{equation}\label{eqn mul lct}
 \mathcal L_{\bf M}[f](w)=\int_{\mathbb R^n}f(x)\mathcal K_M(x,w)dx
\end{equation}
where $\mathcal K_{\bf M}(x,w)$ denotes the kernel and is given by

\begin{equation}\label{ker}
\mathcal K_{\bf M}(x,w)=\frac{1}{(2\pi)^{n/2}\sqrt{|det(B)|}}e^{\frac{i}{2}(w^TDB^{-1}w-2w^TB^{-T}x+x^TB^{-1}Ax)}
\end{equation}
where $x=(x_1,x_2,...,x_n)^T\in\mathbb R^n$, $w=(w_1,w_2,...,w_n)^T\in\mathbb R^n.$\\
\begin{definition}\label{def inv MLCT}
Suppose $f\in L^2(\mathbb R^n)$, then the inversion of the non-separable linear canonical transform of
$f$ is given by
\begin{equation}\label{inv MLCT}
f(x)=\mathcal L_{\bf M^{-1}}\left\{ \mathcal L_{\bf M}[f](w)\right\}(x)=\int_{\mathbb R^n} \mathcal L_{\bf M}[f](w)\mathcal K_{\bf M^{-1}}(w,x)dw,
\end{equation}
where ${\bf M}^{-1}=(D^T,-B^T : -C^T,A^T).$

\end{definition}
 For real, symplectic matrix ${\bf M}=(A,B: C,D)$, the non-separable linear canonical transform
kernel (\ref{ker}) satisfies the following important properties:\\

(i) $\mathcal K_{\bf M^{-1}}(w,x)=\overline{\mathcal K_{\bf M}(x,w)},$\\

(ii) $\int_{\mathbb R^n}\mathcal K_{\bf M}(x,w)\mathcal K_{\bf M^{-1}}(z,x)dx=\delta(z-w).$\\

(iii)$\int_{\mathbb R^n}\mathcal K_{\bf M}(x,w)\mathcal K_{\bf N}(x,z)dx=\mathcal K_{\bf MN}(w,z).$\\\

\begin{lemma}\label{parseval}
Let $f,g\in L^2(\mathbb R^n)$, then the  non-separable linear canonical transform satisfies the following parseval's formula:
\begin{equation}\label{eqn par}
\langle f,g\rangle_2=\langle \mathcal L_{\bf M}[f],\mathcal L_{\bf M}[g]\rangle_2.
\end{equation}
For $f=g$, one have \begin{equation}\label{eqn par2} \|f\|^2_{L^2(\mathbf R^n)}=\|\mathcal L_{\bf M}[f]\|^2_{L^2(\mathbf R^n)}.
\end{equation}
\end{lemma}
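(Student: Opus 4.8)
The plan is to derive the polarization identity \eqref{eqn par} directly from the integral representation \eqref{eqn mul lct} of the transform together with the kernel relations (i) and (ii) listed above, and then to obtain the norm identity \eqref{eqn par2} simply by specializing to $f=g$. First I would write out the right-hand inner product by inserting the defining integrals for both copies of the transform, so that
\begin{equation*}
\langle \mathcal L_{\bf M}[f],\mathcal L_{\bf M}[g]\rangle_2=\int_{\mathbb R^n}\left(\int_{\mathbb R^n}f(x)\mathcal K_{\bf M}(x,w)\,dx\right)\overline{\left(\int_{\mathbb R^n}g(z)\mathcal K_{\bf M}(z,w)\,dz\right)}\,dw .
\end{equation*}

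Next I would interchange the order of integration so as to carry out the $w$-integration first, which concentrates all $w$-dependence into the inner factor $\int_{\mathbb R^n}\mathcal K_{\bf M}(x,w)\,\overline{\mathcal K_{\bf M}(z,w)}\,dw$. Invoking the conjugation relation (i), that is $\overline{\mathcal K_{\bf M}(z,w)}=\mathcal K_{\bf M^{-1}}(w,z)$, this factor turns into $\int_{\mathbb R^n}\mathcal K_{\bf M}(x,w)\,\mathcal K_{\bf M^{-1}}(w,z)\,dw$, which is precisely the reproducing identity encoded in the inversion formula \eqref{inv MLCT} (the companion of property (ii)) and therefore equals the Dirac mass $\delta(x-z)$. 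Collapsing the $z$-integration against this delta leaves $\int_{\mathbb R^n}f(x)\,\overline{g(x)}\,dx=\langle f,g\rangle_2$, which is exactly \eqref{eqn par}; putting $f=g$ then yields $\|f\|_{L^2(\mathbb R^n)}^2=\|\mathcal L_{\bf M}[f]\|_{L^2(\mathbb R^n)}^2$, i.e. \eqref{eqn par2}.

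The step that I expect to require genuine care, rather than routine manipulation, is the simultaneous use of Fubini's theorem and the distributional delta identity: for arbitrary $f,g\in L^2(\mathbb R^n)$ the resulting triple integral need not converge absolutely, and $\int_{\mathbb R^n}\mathcal K_{\bf M}(x,w)\,\mathcal K_{\bf M^{-1}}(w,z)\,dw=\delta(x-z)$ is only valid in the sense of distributions. To keep the argument rigorous I would first prove \eqref{eqn par} on a dense subclass of well-behaved functions --- for example $f,g\in L^1(\mathbb R^n)\cap L^2(\mathbb R^n)$, or the Schwartz space $\mathcal S(\mathbb R^n)$ --- on which the iterated integrals converge absolutely, Fubini applies without reservation, and the kernel identities may be used pointwise. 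Once \eqref{eqn par} is known on this dense subspace, I would extend it to all of $L^2(\mathbb R^n)$ by continuity: both sides are bounded sesquilinear forms on $L^2(\mathbb R^n)$ (the boundedness of $\mathcal L_{\bf M}$ being the relevant input), so their coincidence on a dense set forces coincidence everywhere.
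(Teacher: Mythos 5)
Your plan is sound, but there is nothing in the paper to compare it against: Lemma \ref{parseval} is stated without any proof, being imported as a known fact from the free metaplectic transform literature (e.g.\ \cite{NDself,ND12}). Judged on its own merits, your formal computation is correct. The $w$-quadratic chirps of $\mathcal K_{\bf M}(x,w)$ and $\mathcal K_{\bf M^{-1}}(w,z)$ cancel in the product, the remaining plane wave integrates to $(2\pi)^n|det(B)|\,\delta(z-x)$, and this exactly absorbs the normalization $(2\pi)^{-n}|det(B)|^{-1}$ of the two kernels, so the isometry constant is $1$ as claimed; moreover the identity you need, $\int_{\mathbb R^n}\mathcal K_{\bf M}(x,w)\mathcal K_{\bf M^{-1}}(w,z)\,dw=\delta(x-z)$, is literally the paper's kernel property (ii) applied to ${\bf M}^{-1}$ with the variables renamed, so it is available from the stated toolkit. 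Your repair of the Fubini/delta issue (prove on a dense subclass, extend by continuity) is the standard one and works, but note one caution: the continuity step takes the $L^2$-boundedness of $\mathcal L_{\bf M}$ as an input, and for general $f\in L^2(\mathbb R^n)$ the defining integral (\ref{eqn mul lct}) does not converge absolutely, so that boundedness itself needs an argument --- otherwise the extension step is mildly circular. Both this gap and the delta manipulations can be avoided at once by the paper's own identity (\ref{nlct-ft1}): $\mathcal L_{\bf M}$ factors as multiplication by a unimodular chirp, then the $n$-dimensional Fourier transform, then the substitution $w\mapsto B^{-1}w$ together with another unimodular chirp times $|det(B)|^{-1/2}$. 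Classical Plancherel for $\mathcal F$ plus the Jacobian $|det(B)|$ of the substitution gives (\ref{eqn par2}) immediately, and polarization gives (\ref{eqn par}); this route is shorter, fully rigorous on all of $L^2(\mathbb R^n)$, and is essentially how the cited sources establish the result.
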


The non-separable linear canonical transform (defined in (\ref{eqn mul lct})) of a function $f \in L^2(\mathbb R^n)$ can be computed via associated n-dimensional FT, namely

\begin{equation}\label{nlct-ft1}
\mathcal L_{\bf M}[f](w)=\frac{e^{\frac{i(w^TDB^{-1}w)}{2}}}{\sqrt{det(B)}}\mathcal F\left\{e^{\frac{i(x^TB^{-1}Ax)}{2}}f(x)\right\}(B^{-1}w),
\end{equation}

where $\mathcal F\{f\}(w)$ represents the n-dimensional FT defined in (\ref{n-ft 1})
\begin{equation}\label{n-ft}
\mathcal F\{f\}(w)=\frac{1}{(2\pi)^{n/2}}\int_{\mathbb R^n}f(x)e^{-iw^Tx}dx.
\end{equation}
Eq.(\ref{nlct-ft1}) can be rewritten as

\begin{equation}\label{nlct-ft2}
e^{\frac{-i(w^TDB^{-1}w)}{2}}\mathcal L_{\bf M}[f](w)=\mathcal F\left\{H\right\}(B^{-1}w),
\end{equation}
where $H(x)$
 is given by
 \begin{equation}\label{fun H}
H(x)=\frac{1}{\sqrt{det(B)}}e^{\frac{i(x^TB^{-1}Ax)}{2}}f(x)
\end{equation}

\begin{lemma}\label{lem NLCT-FT mod}
Let $f\in L^2(\mathbb R^n)$ and $H(x)=\frac{1}{\sqrt{det(B)}}e^{\frac{i(x^TB^{-1}Ax)}{2}}f(x)$,  then we have following relationship between n-dimensional FT and NSLCT
\begin{equation}\label{eqn NLCT-FT mod}
|\mathcal F[H](w)|=|\mathcal L_{\bf M}[f](Bw)|.
\end{equation}
\end{lemma}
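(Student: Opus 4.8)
The plan is to read the result off directly from the factorization of the NSLCT through the $n$-dimensional Fourier transform already recorded in~(\ref{nlct-ft2}), since the right-hand side there is precisely $\mathcal F\{H\}$ evaluated at a shifted argument, up to a single phase factor. No new integral needs to be computed.

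First I would take the identity~(\ref{nlct-ft2}),
\begin{equation*}
e^{\frac{-i(w^TDB^{-1}w)}{2}}\mathcal L_{\bf M}[f](w)=\mathcal F\{H\}(B^{-1}w),
\end{equation*}
and perform the change of variable $w\mapsto Bw$, which is legitimate because $|\det(B)|\neq 0$ guarantees that $B$ is invertible. Since $B^{-1}(Bw)=w$, the argument on the right collapses to $w$, and I obtain
\begin{equation*}
\mathcal F\{H\}(w)=e^{\frac{-i\left((Bw)^TDB^{-1}(Bw)\right)}{2}}\,\mathcal L_{\bf M}[f](Bw).
\end{equation*}

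Next I would take moduli of both sides. The only point requiring a moment's care is that the prefactor $e^{\frac{-i}{2}(Bw)^TDB^{-1}(Bw)}$ is unimodular. This holds because $B$ and $D$ are real $n\times n$ sub-matrices and $w\in\mathbb R^n$, so the quadratic form $(Bw)^TDB^{-1}(Bw)$ is a real scalar; hence the exponent is purely imaginary and $\left|e^{\frac{-i}{2}(Bw)^TDB^{-1}(Bw)}\right|=1$. Taking absolute values therefore yields $|\mathcal F[H](w)|=|\mathcal L_{\bf M}[f](Bw)|$, which is exactly~(\ref{eqn NLCT-FT mod}).

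Since every step is an identity rather than an estimate, I do not anticipate any genuine obstacle; the whole content is the bookkeeping of the substitution together with the observation that the discarded phase is unimodular. Should one prefer not to invoke~(\ref{nlct-ft2}) as a black box, the same identity can be rederived from scratch by inserting the explicit kernel~(\ref{ker}) into the definition~(\ref{eqn mul lct}), pulling the $w^TDB^{-1}w$ term out of the integral and absorbing the factor $\tfrac{1}{\sqrt{\det(B)}}e^{\frac{i}{2}x^TB^{-1}Ax}$ into $H$, so that the remaining integral is recognized as $\mathcal F\{H\}(B^{-1}w)$; but this merely reconstructs~(\ref{nlct-ft2}) and is not needed here.
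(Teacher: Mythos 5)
Your proposal is correct and follows exactly the paper's own proof, which likewise consists of substituting $w\mapsto Bw$ in~(\ref{nlct-ft2}) and taking moduli. The only difference is that you spell out the (valid) observation that the discarded phase factor is unimodular because the quadratic form $(Bw)^TDB^{-1}(Bw)$ is real, a point the paper leaves implicit.
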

\begin{proof}The proof follows by
changing $w=Bw$ and taking modulus in (\ref{nlct-ft2}).\\\\
\end{proof}

\begin{remark}: If we change $w=Bw$ and take modulus in (\ref{nlct-ft1}), we have
\begin{equation}\label{eqn NLCT-FT mod 0}
|\mathcal F[H_0](w)|=\sqrt{det(B)}|\mathcal L_{\bf M}[f](Bw)|,
\end{equation}
where $H_0(x)={e^{\frac{i(x^TB^{-1}Ax)}{2}}}f(x)$.
\end{remark}

\section{ The Short-time Non-separable Linear Canonical Transform}\label{sec 3}
In this section, we introduce the novel short-time non-separable linear canonical transform
(ST-NSLCT) and discuss several basic properties
of the ST-NSLCT. These properties play important roles in signal representation of multi-dimensional signals.

\begin{definition}\label{def M-WLCT} Let ${\bf M}=(A,B: C,D)$ be a real, free symplectic matrix. Let $\phi$ be a window function in $L^2(\mathbb R^n)$, the short-time non-separable linear canonical transform
(ST-NSLCT) of the function $f\in L^2(\mathbb R^n)$ with
respect to $\phi$ is defined by
\begin{equation}\label{eqn M-WLCT}
\mathcal V^{\bf M}_\phi[f](w,u)=\int_{\mathbb R^n}f(x)\overline{\phi(x-u)}\mathcal K_{\bf M}(x,w)dx,
\end{equation}
where $x=(x_1,x_2,...,x_n)\in \mathbb R^n$, $u=(u_1,u_2,...,u_n)\in \mathbb R^n$, $w=(w_1,w_2,...,w_n)\in \mathbb R^n$ and $\mathcal K_{\bf M}(x,w)$ is given by (\ref{ker}).
\end{definition}

 In terms of classic convolution, formula (\ref{eqn M-WLCT}) can be written as
\begin{equation*}
\mathcal V^{\bf M}_\phi[f](w,u)=(f(u)\mathcal K_{\bf M}(u,w))\ast\overline{\phi(-u)}.
\end{equation*}

Also by applying the properties of  non-separable linear canonical transform, (\ref{eqn M-WLCT}) can be rewritten in the form of an
inner product as

\begin{equation}\label{NwLC ip}
\mathcal V^{\bf M}_\phi[f](w,u)=\langle f,\Psi^{\bf M}_{x,w,u}\rangle,
\end{equation}
 where $\Psi^{\bf M}_{x,w,u}=\phi(x-u)\mathcal K_{\bf M}(u,w).$\\

The prolificacy of the short-time non-separable linear canonical transform given in Definition \ref{def M-WLCT} can be ascertained
from the following important deductions:\\

\begin{enumerate}
\item[(i)] When ${\bf M}=(I_n\cos\alpha,I_n\sin\alpha:-I_n\sin\alpha,I_n\cos\alpha)$
the ST-NSLCT (\ref{eqn M-WLCT}) yields the
n-dimensional non-separable short-time fractional Fourier transform:
\begin{equation*}
\mathcal F^{\alpha}_{\phi}[f](w,u)=\frac{1}{(2\pi)^{n/2}|\sin\alpha|^{n/2}}\int_{\mathbb R^n}f(x)\overline{\phi(x-u)}e^{\frac{i}{2}(w^Tw+x^Tx)\cot\alpha-iw^Tx\csc\alpha} dx.\\\\
\end{equation*}

\item[(ii)]  For ${\bf M}=({\bf 0},I_n\:-I_n,{\bf 0})$
the ST-NSLCT (\ref{eqn M-WLCT}) boils to
n-dimensional  short-time  Fourier transform:
\begin{equation*}
\mathcal F_{\phi}[f](w,u)=\frac{1}{(2\pi)^{n/2}}\int_{\mathbb R^n}f(x)\overline{\phi(x-u)}e^{-iw^Tx} dx.\\\\
\end{equation*}

\item[(iii)] If the sub-matrices  $A=diag(a_{11},...,a_{nn})$, $B=diag(b_{11},...,b_{nn})$,  $C=diag(c_{11},...,c_{nn})$ and $D=diag(d_{11},...,d_{nn})$ of the real, symplectic matrix ${\bf M}$ are taken then ST-NSLCT (\ref{eqn M-WLCT})yields short-time separable
linear canonical transform:

\begin{equation*}
\mathcal L^{\bf M}_{\phi}[f](w,u)=\frac{1}{(2\pi)^{n/2}|\Pi_{j=1}^n b_{jj}|^{1/2}}\int_{\mathbb R^n}f(x)\overline{\phi(x-u)}e^{i\sum_{j=1}^n \left(\frac{d_{jj}w^2_{j}-2w_jx_j+a_{jj}x^2_{j}}{2b_{jj}}\right)} dx.\\
\end{equation*}

\item[(iv)]  For ${\bf M}=(I_n,B:{\bf 0},I_n)$
   the ST-NSLCT (\ref{eqn M-WLCT}) boils down to
     n-dimensional  short-time  Fresnel transform:
\begin{equation*}
\mathcal F^{\bf M}_{\phi}[f](w,u)=\frac{1}{(2\pi)^{n/2}\sqrt{|det(B)|}}\int_{\mathbb R^n}f(x)\overline{\phi(x-u)}e^{i\frac{(w^TB^{-1}w-2w^{T}B^{-T}x+x^{T}B^{-1}x)}{2}} dx.\\\\
\end{equation*}
\end{enumerate}
Now we shall establish relationsip between NSLCT and ST-NSLCT, which will be helpful in establishing some UP's in section \ref{sec 4}.\\

For fixed $u,$ we have
\begin{equation}\label{eqn WLCT-LCT}
\mathcal V^{\bf M}_\phi[f](w,u)=\mathcal L_{\bf M}\left\{f(x)\overline{\phi(x-u)}\right\}(w)
\end{equation}
Applying inverse  non-separable linear canonical transform
 (\ref{inv MLCT}), we have
\begin{eqnarray}
\label{f-phi} f(x)\overline{\phi(x-u)}&=&\mathcal L_{\bf  M^{-1}}\left\{\mathcal V^{\bf M}_\phi[f](w,u) \right\}\\
\nonumber&=&\int_{\mathbb R^n}\mathcal V^{\bf M}_\phi[f](w,u)\mathcal K_{\bf M^{-1}}(w,x)dw\\
\label{f-phi1}&=&\int_{\mathbb R^n}\mathcal V^{\bf M}_\phi[f](w,u)\overline{\mathcal K_{\bf M}(x,w)}dw.
\end{eqnarray}

Now, we discuss several basic properties of the ST-NSLCT given by (\ref{eqn M-WLCT}). These properties
play important roles in multi-dimensional signal processing.\\

The properties which follows directly from definition of the ST-NSLCT viz: Linearity; Anti-linearity; Translation; Modulation and Scaling are omitted. We shall focus on boundedness, Moyal's formula and inversion.

\begin{lemma}[ Relation with n-dimensional  STFT]\label{lem WLCT-WFT}

\begin{eqnarray*}
\nonumber&& \mathcal V^{\bf M}_\phi[f](w,u)\\\
\nonumber&&=\frac{1}{(2\pi)^{n/2}\sqrt{|det(B)|}}\int_{\mathbb R^n}f(x)\overline{\phi(x-u)}e^{\frac{i(w^TDB^{-1}w+x^TB^{-1}Ax-2w^TB^{-T}x)}{2}}dx\\\
\nonumber&&=\frac{e^{i\frac{w^TDB^{-1}w}{2}}}{(2\pi)^{n/2}\sqrt{|det(B)|}}\int_{\mathbb R^n}e^{\frac{i(x^TB^{-1}Ax)}{2}}f(x)\overline{\phi(x-u)}e^{-i(B^{-1}w)^Tx}dx\\
\nonumber&&={e^{i\frac{w^TDB^{-1}w}{2}}}\mathcal V_\phi[H](B^{-1}w,u),\\
\end{eqnarray*}
\end{lemma}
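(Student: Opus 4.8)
The plan is to unfold the definition of the ST-NSLCT, peel off the purely $w$-dependent phase factor, and then recognize what remains as an ordinary $n$-dimensional STFT of a modulated copy of $f$. All three displayed equalities are obtained in sequence by direct manipulation of the integral, with no appeal to any deeper machinery.

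First I would substitute the explicit kernel (\ref{ker}) into the defining integral (\ref{eqn M-WLCT}). Since $\mathcal K_{\bf M}(x,w)$ carries exactly the normalizing constant $(2\pi)^{-n/2}|det(B)|^{-1/2}$ together with the quadratic phase $e^{\frac{i}{2}(w^TDB^{-1}w + x^TB^{-1}Ax - 2w^TB^{-T}x)}$, this reproduces the first displayed line verbatim. The second step is to split this exponential according to its dependence on the integration variable $x$: the term $w^TDB^{-1}w$ is constant in $x$, so $e^{\frac{i}{2}w^TDB^{-1}w}$ factors out of the integral; the chirp $e^{\frac{i}{2}x^TB^{-1}Ax}$ is absorbed into the integrand alongside $f(x)$; and the cross term is rewritten using the identity $w^TB^{-T}x = (B^{-1}w)^Tx$, which holds because $B^{-T}=(B^{-1})^T$. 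This turns $e^{-iw^TB^{-T}x}$ into the plain Fourier kernel $e^{-i(B^{-1}w)^Tx}$ evaluated at the shifted frequency $B^{-1}w$, yielding the second displayed line.

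For the final identification I would invoke the substitution $H(x)=\frac{1}{\sqrt{det(B)}}e^{\frac{i}{2}x^TB^{-1}Ax}f(x)$ from (\ref{fun H}), equivalently $e^{\frac{i}{2}x^TB^{-1}Ax}f(x)=\sqrt{det(B)}\,H(x)$. Feeding this into the second line, the factor $\sqrt{det(B)}$ cancels the $|det(B)|^{-1/2}$ in front (treating $det(B)>0$, consistent with the paper's use of $\sqrt{det(B)}$), leaving precisely $e^{\frac{i}{2}w^TDB^{-1}w}$ times $(2\pi)^{-n/2}\int_{\mathbb R^n}H(x)\overline{\phi(x-u)}e^{-i(B^{-1}w)^Tx}dx$. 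By Definition \ref{def STFT} this last integral is exactly the STFT $\mathcal V_\phi[H]$ evaluated at the point $(B^{-1}w,u)$, which is the third displayed line.

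The computation is essentially bookkeeping, so I expect no genuine obstacle; the only place demanding care is the second step. There one must correctly transpose $B^{-T}$ so that the Fourier variable is read off as $B^{-1}w$ rather than some other contraction, and must track the interplay between the $\sqrt{det(B)}$ hidden inside $H$ and the $|det(B)|^{-1/2}$ in the kernel, so that the normalization of the emerging STFT comes out as exactly $(2\pi)^{-n/2}$ with no residual determinant factor.
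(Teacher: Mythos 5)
Your proposal is correct and follows essentially the same route as the paper, whose ``proof'' is nothing more than the displayed chain itself: substitute the kernel (\ref{ker}) into (\ref{eqn M-WLCT}), pull out the $x$-independent phase via $w^TB^{-T}x=(B^{-1}w)^Tx$, and absorb the chirp together with the factor $1/\sqrt{|det(B)|}$ into $H$ so that the remaining integral is the STFT $\mathcal V_\phi[H](B^{-1}w,u)$. You were also right to take $H$ from (\ref{fun H}), i.e. $H(x)=\frac{1}{\sqrt{det(B)}}e^{\frac{i}{2}x^TB^{-1}Ax}f(x)$, rather than from the paper's restatement (\ref{fun H1}) with $\sqrt{|det(A)|}$, which is evidently a typo: only the $det(B)$ normalization makes the determinant factors cancel and the lemma hold as stated (up to the implicit assumption $det(B)>0$ that you flagged).
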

where $\mathcal V_{\phi}[f]$ represents n-dimensional STFT given in Definition \ref{def STFT}.\\

 And
 \begin{equation}\label{fun H1}
H(x)=\frac{1}{\sqrt{|det(A)|}}e^{\frac{i(x^TB^{-1}Ax)}{2}}f(x).\\
\end{equation}

Next we prove the lemma which is very important in  establishing various uncertainty principles in Section \ref{sec 4}
\begin{lemma}\label{lem NWLCT-FT mod}
Let $f\in L^2(\mathbb R^n)$ and $F(x)=e^{\frac{i(x^TB^{-1}Ax)}{2}}f(x)\overline{\phi(x-u)}$,  then we have following relationship between n-dimensional FT and ST-NSLCT.
\begin{equation}\label{eqn NWLCT-FT mod}
|\mathcal F[F](w)|=\sqrt{|det(B)|}|\mathcal V^{\bf M}_\phi[f](Bw,u)|.
\end{equation}
\end{lemma}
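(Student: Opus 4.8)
The plan is to verify (\ref{eqn NWLCT-FT mod}) by a direct computation, starting from the ST-NSLCT of $f$ evaluated at the dilated frequency $Bw$ and reducing it to the ordinary Fourier transform of $F$. First I would write out the definition (\ref{eqn M-WLCT}) at the point $(Bw,u)$,
\begin{equation*}
\mathcal V^{\bf M}_\phi[f](Bw,u)=\int_{\mathbb R^n}f(x)\overline{\phi(x-u)}\,\mathcal K_{\bf M}(x,Bw)\,dx,
\end{equation*}
and substitute $w\mapsto Bw$ into the kernel (\ref{ker}). The whole argument hinges on simplifying the three terms of the quadratic exponent in $\mathcal K_{\bf M}(x,Bw)$, so this is the computational core.

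The key step is the matrix bookkeeping in that exponent. Using $B^{-1}B=I_n$ and $B^TB^{-T}=I_n$, the cross term collapses to the Fourier-type kernel we want, $-2(Bw)^TB^{-T}x=-2w^Tx$; the pure-$w$ term becomes $(Bw)^TDB^{-1}(Bw)=w^TB^TDw$; and the pure-$x$ term $x^TB^{-1}Ax$ is untouched. Pulling the constant $\frac{1}{(2\pi)^{n/2}\sqrt{|\det(B)|}}$ and the $x$-free phase $e^{\frac{i}{2}w^TB^TDw}$ out of the integral leaves exactly $\int_{\mathbb R^n}e^{\frac{i}{2}x^TB^{-1}Ax}f(x)\overline{\phi(x-u)}\,e^{-iw^Tx}\,dx$, which by (\ref{n-ft}) equals $(2\pi)^{n/2}\mathcal F[F](w)$ for $F$ as in the statement. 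This yields
\begin{equation*}
\mathcal V^{\bf M}_\phi[f](Bw,u)=\frac{1}{\sqrt{|\det(B)|}}\,e^{\frac{i}{2}w^TB^TDw}\,\mathcal F[F](w).
\end{equation*}

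Finally I would take moduli. Since $B$ and $D$ are real, the scalar $w^TB^TDw$ is real, so the phase $e^{\frac{i}{2}w^TB^TDw}$ has modulus one and drops out, giving $|\mathcal V^{\bf M}_\phi[f](Bw,u)|=\frac{1}{\sqrt{|\det(B)|}}|\mathcal F[F](w)|$, which rearranges to (\ref{eqn NWLCT-FT mod}). I do not expect a genuine obstacle: the argument is essentially the change of variable $w\mapsto Bw$ already used for Lemma \ref{lem NLCT-FT mod}, now carried out with the window $\overline{\phi(x-u)}$ riding along inside the integral. The only points demanding care are confirming the cross-term identity $B^TB^{-T}=I_n$ and tracking that the surviving prefactor is $1/\sqrt{|\det(B)|}$ rather than its square. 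Alternatively, the same conclusion follows in one line from Lemma \ref{lem WLCT-WFT} by setting $w\mapsto Bw$ and noting $|\mathcal V_\phi[H](w,u)|=|\mathcal F[F](w)|/\sqrt{|\det(B)|}$, which I would record as a remark.
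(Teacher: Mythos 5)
Your proof is correct and takes essentially the same route as the paper's: expand the definition of $\mathcal V^{\bf M}_\phi[f]$, pull the unimodular phase factor (the pure-$w$ quadratic term) out of the integral, recognize the remaining integral as $(2\pi)^{n/2}\mathcal F[F]$, and take moduli. The only difference is that you substitute $w\mapsto Bw$ into the kernel at the outset, whereas the paper first derives $\mathcal V^{\bf M}_\phi[f](w,u)=\frac{1}{\sqrt{|\det(B)|}}e^{\frac{i}{2}w^TDB^{-1}w}\mathcal F[F](B^{-1}w)$ at general $w$ and then performs the change $w\mapsto Bw$; this reordering is immaterial.
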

\begin{proof}
From (\ref{eqn M-WLCT}), we have
\begin{eqnarray*}
\nonumber&& \mathcal V^{\bf M}_\phi[f](w,u)\\\
\nonumber&&=\frac{1}{(2\pi)^{n/2}\sqrt{|det(B)|}}\int_{\mathbb R^n}f(x)\overline{\phi(x-u)}e^{\frac{i(w^TDB^{-1}w+x^TB^{-1}Ax-2w^TB^{-T}x)}{2}}dx\\\
\nonumber&&=\frac{e^{i\frac{w^TDB^{-1}w}{2}}}{(2\pi)^{n/2}\sqrt{|det(B)|}}\int_{\mathbb R^n}e^{\frac{i(x^TB^{-1}Ax)}{2}}f(x)\overline{\phi(x-u)}e^{-i(B^{-1}w)^Tx}dx\\\
\label{a}&&=\frac{1}{\sqrt{|det(B)|}}{e^{i\frac{w^TDB^{-1}w}{2}}}\mathcal F[F](B^{-1}w),\\\
\end{eqnarray*}

On changing $w=Bw$ and taking modulus in (\ref{a}), we get
$$\sqrt{|det(B)|}|\mathcal V^{\bf M}_\phi[f](Bw,u)|=|\mathcal F[F](w)|$$.
Which completes the proof.
 \end{proof}

\begin{remark}: If we take $F_0(x)=\frac{e^{\frac{i(x^TB^{-1}Ax)}{2}}}{\sqrt{det(B)}}f(x)\overline{\phi(x-u)}$, then above lemma yields
\begin{equation}\label{eqn NWLCT-FT  mod 0}
|\mathcal F[F_0](w)|=|\mathcal L_{\bf M}[f](Bw)|.\\
\end{equation}
\end{remark}

\begin{theorem}[Boundedness] Let  $f,\phi\in L^2(\mathbb R^n)$, where $\phi$ is a non zero window function then we have
\begin{equation*}
\left|\mathcal V^{\bf M}_\phi[f](w,u)\right|\le\frac{1}{(2\pi)^{n/2}\sqrt{|det(B)|}}\|f\|_{L^2(\mathbb R^n)}\|\phi\|_{L^2(\mathbb R^n)},
\end{equation*}
\end{theorem}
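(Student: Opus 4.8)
The plan is to proceed directly from the integral definition of the ST-NSLCT in (\ref{eqn M-WLCT}) and estimate it pointwise in the variables $(w,u)$, reducing everything to the Cauchy--Schwarz inequality. The crucial structural observation is that the kernel $\mathcal K_{\bf M}(x,w)$ given in (\ref{ker}) is a constant amplitude multiplied by a pure phase: since the matrices $A,B,C,D$ are real and $x,w\in\mathbb R^n$, the exponent $\tfrac{i}{2}(w^TDB^{-1}w-2w^TB^{-T}x+x^TB^{-1}Ax)$ is purely imaginary, so its modulus is $1$. Hence
\begin{equation*}
\left|\mathcal K_{\bf M}(x,w)\right|=\frac{1}{(2\pi)^{n/2}\sqrt{|det(B)|}}
\end{equation*}
for all $x,w\in\mathbb R^n$, and this constant is exactly the prefactor appearing on the right-hand side of the claimed bound.

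First I would apply the triangle inequality for integrals to (\ref{eqn M-WLCT}), writing
\begin{equation*}
\left|\mathcal V^{\bf M}_\phi[f](w,u)\right|\le\int_{\mathbb R^n}|f(x)|\,|\overline{\phi(x-u)}|\,|\mathcal K_{\bf M}(x,w)|\,dx,
\end{equation*}
and then pull the constant kernel modulus out of the integral using the identity above. This leaves the integral $\int_{\mathbb R^n}|f(x)|\,|\phi(x-u)|\,dx$, which I would bound by Cauchy--Schwarz (H\"older with $p=q=2$) to obtain $\|f\|_{L^2(\mathbb R^n)}\,\|\phi(\cdot-u)\|_{L^2(\mathbb R^n)}$.

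The final step is to use the translation invariance of the Lebesgue measure, so that $\|\phi(\cdot-u)\|_{L^2(\mathbb R^n)}=\|\phi\|_{L^2(\mathbb R^n)}$ for every fixed $u\in\mathbb R^n$; the bound is then uniform in both $w$ and $u$, giving precisely the asserted inequality. Honestly, there is no serious obstacle here: the only thing one must not overlook is verifying that the kernel amplitude is genuinely constant (that the entire exponent is imaginary), since this is what produces the correct normalizing factor $1/((2\pi)^{n/2}\sqrt{|det(B)|})$ rather than a cruder estimate. Everything else is a routine two-line application of Cauchy--Schwarz and translation invariance.
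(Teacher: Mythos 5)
Your proof is correct and follows essentially the same route as the paper's: both arguments bound the modulus of the defining integral by the triangle inequality, pull out the constant kernel amplitude $1/((2\pi)^{n/2}\sqrt{|det(B)|})$ since the exponent is purely imaginary, and finish with Cauchy--Schwarz together with translation invariance of the Lebesgue measure. The only cosmetic difference is that the paper estimates $|\mathcal V^{\bf M}_\phi[f](w,u)|^2$ and takes a square root at the end, whereas you work with the modulus directly and state the translation-invariance step explicitly rather than implicitly.
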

\begin{proof}
By using the  Cauchy-Schwarz inequality in Definition \ref{def M-WLCT}, we have
\begin{eqnarray*}
&&\left|\mathcal V^{\bf M}_\phi[f](w,u)\right|^2\\\
&&=\left|\int_{\mathbb R^n}f(x)\overline{\phi(x-u)}\mathcal K_{\bf M}(x,w)dx\right|^2\\\
&&\le\left(\int_{\mathbb R^n}\left|f(x)\overline{\phi(x-u)}\mathcal K_{\bf M}(x,w)\right|dx\right)^2\\\
&&=\left(\frac{1}{(2\pi)^{n/2}\sqrt{|det(B)|}}\int_{\mathbb R^n}\left|f(x)\overline{\phi(x-u)}e^{\frac{i(w^TDB^{-1}w+x^TB^{-1}Ax-2w^TB^{-T}x)}{2}}\right|dx\right)^2\\\
&&=\left(\frac{1}{(2\pi)^{n/2}\sqrt{|det(B)|}}\int_{\mathbb R^n}\left|f(x)\overline{\phi(x-u)}\right|dx\right)^2\\\
&&=\frac{1}{(2\pi)^{n}{|det(B)|}}\left(\int_{\mathbb R^n}\left|f(x)\overline{\phi(x-u)}\right|dx\right)^2\\\
&&=\frac{1}{(2\pi)^{n}{|det(B)|}}\left(\int_{\mathbb R^n}\left|f(x)\right|^2dx\right)\left(\int_{\mathbb R^n}\left|\overline{\phi(x-u)}\right|^2dx\right)\\\
&&=\frac{1}{(2\pi)^{n}{|det(B)|}}\|f\|^2_{L^2(\mathbb R^n)}\|\phi\|^2_{L^2(\mathbb R^n)},
\end{eqnarray*}
On further simplification, we obtain
\begin{equation*}
\left|\mathcal V^{\bf M}_\phi[f](w,u)\right|\le\frac{1}{(2\pi)^{n/2}\sqrt{|det(B)|}}\|f\|_{L^2(\mathbb R^n)}\|\phi\|_{L^2(\mathbb R^n)},
\end{equation*}
which completes the proof.\\
\end{proof}
In the following theorem, we show that the proposed ST-NSLCT  is reversible in the sense
that the input signal $f$ can be recovered easily from the transformed domain.
\begin{theorem}[Inversion] For any fixed window function $\phi\in L^2(\mathbb R^n).$ Then, any $f\in L^2(\mathbb R^n)$ can be reconstructed by the formula
\begin{equation}
f(x)=\frac{1}{\|\phi(x)\|^2}\int_{\mathbb R^n}\int_{\mathbb R^n}\mathcal V^{\bf M}_{\phi}[f](w,u)\phi(x-u)\mathcal K_{\bf M^{-1}}(x,w)dwdu
\end{equation}

\end{theorem}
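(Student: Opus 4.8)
The plan is to invert the transform in two stages: first undo the non-separable linear canonical transform in the frequency variable $w$, and then remove the window by integrating against $\phi$ in the translation variable $u$. The starting point is the identity (\ref{f-phi1}) already recorded above, namely
\[
f(x)\overline{\phi(x-u)}=\int_{\mathbb R^n}\mathcal V^{\bf M}_\phi[f](w,u)\,\overline{\mathcal K_{\bf M}(x,w)}\,dw,
\]
which follows by applying the inverse NSLCT (\ref{inv MLCT}) to the fixed-$u$ relation (\ref{eqn WLCT-LCT}). This already recovers the windowed signal $f(x)\overline{\phi(x-u)}$ for each fixed $u$, so the remaining task is only to eliminate the factor $\overline{\phi(x-u)}$.

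Next I would multiply both sides by $\phi(x-u)$, producing $f(x)\,|\phi(x-u)|^2$ on the left, and then integrate both sides over $u\in\mathbb R^n$. On the left this gives
\[
f(x)\int_{\mathbb R^n}|\phi(x-u)|^2\,du=f(x)\,\|\phi\|^2_{L^2(\mathbb R^n)},
\]
where the integral is evaluated by the translation-invariance change of variable $u\mapsto x-u$. Dividing through by $\|\phi\|^2_{L^2(\mathbb R^n)}$ then isolates $f(x)$ and, after invoking the kernel property (i), $\overline{\mathcal K_{\bf M}(x,w)}=\mathcal K_{\bf M^{-1}}(w,x)$, yields exactly the claimed reconstruction formula.

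The one genuinely delicate point is the exchange of the order of integration in $w$ and $u$ needed to rewrite the right-hand side as the stated double integral, and this is where I expect to spend the most care. To justify the Fubini step I would appeal to the boundedness estimate proved just above, which controls $|\mathcal V^{\bf M}_\phi[f](w,u)|$ pointwise by a constant multiple of $\|f\|_{L^2(\mathbb R^n)}\|\phi\|_{L^2(\mathbb R^n)}$, together with the hypothesis $f,\phi\in L^2(\mathbb R^n)$, to secure absolute integrability of the integrand on $\mathbb R^n\times\mathbb R^n$. Once integrability is in hand the interchange is legitimate, the inner $w$-integral reproduces the inverse-transform identity above, and the outer $u$-integral completes the argument. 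A last piece of bookkeeping is to confirm that the ordering of the arguments of $\mathcal K_{\bf M^{-1}}$ matches the convention adopted in the statement.
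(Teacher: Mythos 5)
Your argument is essentially identical to the paper's own proof: both start from the fixed-$u$ inversion identity (\ref{f-phi})--(\ref{f-phi1}), multiply by $\phi(x-u)$, integrate over $u$ to obtain $f(x)\|\phi\|^2_{L^2(\mathbb R^n)}$ on the left, and divide by $\|\phi\|^2_{L^2(\mathbb R^n)}$, with the kernel conjugation property (i) accounting for the passage from $\overline{\mathcal K_{\bf M}(x,w)}$ to $\mathcal K_{\bf M^{-1}}$. One caution on your ``delicate point'': the boundedness estimate gives only a pointwise bound on $\mathcal V^{\bf M}_\phi[f]$, which does not yield absolute integrability over the unbounded domain $\mathbb R^n\times\mathbb R^n$ (the kernel has constant modulus in $w$), but no interchange is actually required since the double integral in the statement arises in exactly the iterated order your computation produces --- the paper performs the same formal manipulation without comment.
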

\begin{proof}
From (\ref{f-phi}), we have

\begin{eqnarray*}
f(x)\overline{\phi(x-u)}
&=&\mathcal L_{\bf M^{-1}}\left\{\mathcal V^{\bf M}_{\phi}[f](w,u)\right\}\\\
&=&f(x)\overline{\phi(x-u)}\\\
&=&\int_{\mathbb R^n}\mathcal V^{\bf M}_{\phi}[f](w,u)\mathcal K_{\bf M^{-1}}(x,w)dw\\\
\end{eqnarray*}
Multiplying both sides of last equation  by $\phi(x-u)$ and integrating
with respect to $du$, yields
\begin{eqnarray*}
\int_{\mathbb R^n}f(x)\overline{\phi(x-u)}\phi(x-u)du&=&\int_{\mathbb R^n}\int_{\mathbb R^n}\mathcal V^{\bf M}_{\phi}[f](w,u)\mathcal K_{\bf M^{-1}}(x,w)\phi(x-u)dwdu\\\
\int_{\mathbb R^n}f(x)|\phi(x-u)|^2du&=&\int_{\mathbb R^n}\int_{\mathbb R^n}\mathcal V^{\bf M}_{\phi}[f](w,u)\phi(x-u)\mathcal K_{\bf M^{-1}}(x,w)dwdu\\\
f(x)\|\phi(x)\|^2&=&\int_{\mathbb R^n}\int_{\mathbb R^n}\mathcal V^{\bf M}_{\phi}[f](w,u)\phi(x-u)\mathcal K_{\bf M^{-1}}(x,w)dwdu,\\\
\end{eqnarray*}
which implies
\begin{equation*}
f(x)=\frac{1}{\|\phi(x)\|^2}\int_{\mathbb R^n}\int_{\mathbb R^n}\mathcal V^{\bf M}_{\phi}[f](w,u)\phi(x-u)\mathcal K_{\bf M^{-1}}(x,w)dwdu.\\
\end{equation*}
\end{proof}

\begin{theorem}[Moyal's formula] Let $\mathcal V^{\bf M}_{\phi_1}[f](w,u$ and $\mathcal V^{\bf M}_{\phi_2}[g](w,u)$ be the ST-NSLCT transforms of $f$ and $g$, respectively. Then, we have
\begin{equation}
\left\langle \mathcal V^{\bf M}_{\phi_1}[f](w,u),\mathcal V^{\bf M}_{\phi_2}[g](w,u)\right\rangle=\langle f,g\rangle_{L^2(\mathbb R^n)}\langle \phi_1,\phi_2\rangle_{L^2(\mathbb R^n)}
\end{equation}
\end{theorem}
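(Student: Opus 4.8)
The plan is to reduce everything to the Parseval identity for the NSLCT already recorded in Lemma \ref{parseval}, by exploiting the fact, stated in (\ref{eqn WLCT-LCT}), that for each fixed $u$ the ST-NSLCT is simply the NSLCT of the windowed signal. Concretely, I would write the left-hand inner product as the double integral
\[
\int_{\mathbb R^n}\int_{\mathbb R^n}\mathcal V^{\bf M}_{\phi_1}[f](w,u)\,\overline{\mathcal V^{\bf M}_{\phi_2}[g](w,u)}\,dw\,du,
\]
and first carry out the inner integral in $w$ with $u$ held fixed.

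By (\ref{eqn WLCT-LCT}) we have $\mathcal V^{\bf M}_{\phi_1}[f](\cdot,u)=\mathcal L_{\bf M}\{f(\cdot)\overline{\phi_1(\cdot-u)}\}$, and similarly for the second factor. Hence the $w$-integral is exactly $\big\langle \mathcal L_{\bf M}\{f\,\overline{\phi_1(\cdot-u)}\},\,\mathcal L_{\bf M}\{g\,\overline{\phi_2(\cdot-u)}\}\big\rangle_2$, and Lemma \ref{parseval} collapses it to
\[
\int_{\mathbb R^n} f(x)\overline{g(x)}\,\overline{\phi_1(x-u)}\,\phi_2(x-u)\,dx.
\]
This is the one genuinely mechanical step and the heart of the argument: the window translation by $u$ is untouched by the $w$-integration, so Parseval acts purely on the $x$-variable.

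Next I would integrate this expression over $u$ and invoke Fubini's theorem to interchange the order of the $x$- and $u$-integrations; the interchange is legitimate because $f,g,\phi_1,\phi_2\in L^2(\mathbb R^n)$ and the boundedness estimate already proved guarantees absolute convergence of the double integral. After the swap the inner integral in $u$ decouples, and the substitution $v=x-u$ turns $\int_{\mathbb R^n}\overline{\phi_1(x-u)}\phi_2(x-u)\,du$ into the constant $\int_{\mathbb R^n}\overline{\phi_1(v)}\phi_2(v)\,dv=\langle\phi_1,\phi_2\rangle$, independent of $x$. Pulling this constant out leaves $\int_{\mathbb R^n} f(x)\overline{g(x)}\,dx=\langle f,g\rangle$, and the two factors combine into the claimed product.

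The only point demanding care is the Fubini interchange together with the bookkeeping of complex conjugates: one must stay consistent with the convention $\langle h,k\rangle=\int h\,\overline{k}$ throughout, otherwise the window factor emerges as $\overline{\langle\phi_1,\phi_2\rangle}$ rather than $\langle\phi_1,\phi_2\rangle$. Beyond this I do not anticipate any serious obstacle: once (\ref{eqn WLCT-LCT}) and Lemma \ref{parseval} are in hand, the result is essentially a change-of-variables computation.
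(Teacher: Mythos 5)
Your proposal is correct and reaches exactly the same factored double integral as the paper, but by a different mechanism, so it is worth comparing the two. The paper's proof expands the conjugated factor $\overline{\mathcal V^{\bf M}_{\phi_2}[g](w,u)}$ through the defining integral, interchanges the order of integration, and then collapses the inner $w$-integral $\int_{\mathbb R^n}\mathcal V^{\bf M}_{\phi_1}[f](w,u)\overline{\mathcal K_{\bf M}(x,w)}\,dw$ by means of the reconstruction identity (\ref{f-phi}); you instead identify both factors as NSLCTs of the windowed signals via (\ref{eqn WLCT-LCT}) and apply Parseval's formula (Lemma \ref{parseval}) at each fixed $u$. The two routes are cousins, since both ultimately encode the unitarity of $\mathcal L_{\bf M}$, but yours has a technical advantage: Parseval is applied to honest $L^2$ inner products, whereas the paper's interchange of $dx$ and $dw$ involves the unimodular kernel $\mathcal K_{\bf M}$, so that triple integral is not absolutely convergent and the step is only formal. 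Your concluding Fubini step is genuinely justified, since $\int_{\mathbb R^n}\int_{\mathbb R^n}|f(x)||g(x)||\phi_1(x-u)||\phi_2(x-u)|\,du\,dx\le\|f\|_{2}\|g\|_{2}\|\phi_1\|_{2}\|\phi_2\|_{2}$ by two applications of Cauchy--Schwarz. One caveat you correctly flag but then let slip back in: with the convention $\langle h,k\rangle=\int h\overline{k}$, the window factor is $\int_{\mathbb R^n}\overline{\phi_1(v)}\phi_2(v)\,dv=\langle\phi_2,\phi_1\rangle=\overline{\langle\phi_1,\phi_2\rangle}$, so the identity as stated holds only up to this conjugation; the paper's own proof commits the same slip silently, so on this point you are no worse off, and indeed more self-aware, than the original.
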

\begin{proof}
\begin{eqnarray*}
&&\left\langle \mathcal V^{\bf M}_{\phi_1}[f](w,u),\mathcal V^{\bf M}_{\phi_2}[g](w,u)\right\rangle\\\\
&&=\int_{\mathbb R^n}\int_{\mathbb R^n}\mathcal V^{\bf M}_{\phi_1}[f](w,u)\mathcal V^{\bf M}_{\phi_2}[g](w,u)dwdu\\\\
&&=\int_{\mathbb R^n}\int_{\mathbb R^n}\mathcal V^{\bf M}_{\phi_1}[f](w,u)\left[\overline{\int_{\mathbb R^n}g(x)\overline{\phi_2(x-u)}\mathcal K_{\bf M}(x,w)dx}\right]dwdu\\\\
&&=\int_{\mathbb R^n}\int_{\mathbb R^n}\int_{\mathbb R^n}\mathcal V^{\bf M}_{\phi_1}[f](w,u)\overline{\mathcal K_{\bf M}(x,w)}.\overline{g(x)}\phi_2(x-u)dxdudw\\\
&&=\int_{\mathbb R^n}\int_{\mathbb R^n}\left[\int_{\mathbb R^n}\mathcal V^{\bf M}_{\phi_1}[f](w,u)\overline{\mathcal K_{\bf M}(x,w)}dw\right]\overline{g(x)}\phi_2(x-u)dxdu.\\\
\end{eqnarray*}
Now applying (\ref{f-phi}) in above equation, we obtain
\begin{eqnarray*}
\left\langle \mathcal V^{\bf M}_{\phi_1}[f](w,u),\mathcal V^{\bf M}_{\phi_2}[g](w,u)\right\rangle
&=&\int_{\mathbb R^n}\int_{\mathbb R^n}f(x)\overline{\phi_1(x-u)}.\overline{g(x)}\phi_2(x-u)dxdu\\\\
&=&\int_{\mathbb R^n}f(x)\overline{g(x)}dx\int_{\mathbb R^n}\overline{\phi_1(x-u)}\phi_2(x-u)du\\\\
&=&\langle f,g\rangle_{L^2(\mathbb R^n)}\langle \phi_1,\phi_2\rangle_{L^2(\mathbb R^n)},
\end{eqnarray*}
which completes the proof.
\end{proof}
From the above theorem, we obtain the following consequences.

(i) If $\phi_1=\phi_2,$ then
\begin{equation}\label{con1}
\left\langle \mathcal V^{\bf M}_{\phi}[f](w,u),\mathcal V^{\bf M}_{\phi}[g](w,u)\right\rangle=\| \phi\|_{L^2(\mathbb R^n)}\langle f,g\rangle_{L^2(\mathbb R^n)}.\\\\
\end{equation}

(ii) If $f=g$ and $\phi_1=\phi_2,$ then
\begin{equation}\label{con2}
\left\langle \mathcal V^{\bf M}_{\phi}[f](w,u),\mathcal V^{\bf M}_{\phi}[f](w,u)\right\rangle=\| \phi\|_{L^2(\mathbb R^n)}\| f\|_{L^2(\mathbb R^n)}.
\end{equation}

(iii) If $f=g$ and $\phi_1=\phi_2=1,$ then
\begin{equation}\label{con3}
\left\langle \mathcal V^{\bf M}_{\phi}[f](w,u),\mathcal V^{\bf M}_{\phi}[f](w,u)\right\rangle=\| f\|_{L^2(\mathbb R^n)}.
\end{equation}
Equation (\ref{con3}) states that the proposed ST-NSLCT(\ref{eqn M-WLCT}) becomes an isometry from $L^2(\mathbb R^n)$ into $L^2(\mathbb R^n)$. In
other words, the total energy of a  signal computed in the in the
short-time non-seperable linear canonical domain is equal to the total energy computed in
the spatial domain.

\section{ Uncertainty Principles for the Short-time Non-seperable  Linear canonical Transformation}\label{sec 4}
The uncertainty principle lies at the heart of harmonic analysis, which asserts that
“the position and the velocity of a particle cannot be both determined precisely at the
same time”. Authors in \cite{NDself,ND12} studied various uncertainty principles associated with non-seperable LCT, since ST-NSLCT is the generalized version of NSLCT and keeping in view the fact that the theory of UP's for the short-time non-separable
linear canonical transform is yet to be explored exclusively; therefore, it is natural and interesting to study different forms of UP's in the ST-NSLCT domain.

\subsection{ Pitt’s inequality}
The Pitt’s inequality in the Fourier domain expresses a fundamental relationship between
a sufficiently smooth function and the corresponding Fourier transform (Beckner 1995)\cite{ND23}. In \cite{NDself} authors introduced the Pitt’s inequality can be extended to the free metaplectic transformation domain, we here extend it to the ST-NSLCT domain as:
\begin{lemma}[Pitt’s inequality for the NSLCT \cite{MDL}]\label{lem pit} Let $f\in \mathbb S(\mathbb R^n)$ the Schwartz class in $L^2(\mathbb R^n)$, then we have the inequality
\begin{equation*}
\int_{\mathbb R^n}|w|^{-\alpha}|\mathcal L_{\bf M}[f](w)|^2dw\le C_{\alpha}|det(B)|^{-\alpha}\int_{\mathbb R^n}|x|^{\alpha}|f(x)|^2dx,\\\\
\end{equation*}
where $C_{\alpha}=\pi^{\alpha}\left[\Gamma\left(\frac{n-\alpha}{4}\right)/\Gamma\left(\frac{n+\alpha}{4}\right)\right]^2$ and $0\le\alpha < n.$
\end{lemma}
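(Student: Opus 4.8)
The plan is to transfer the classical Pitt inequality for the Fourier transform onto the NSLCT using the modulus identity of Lemma \ref{lem NLCT-FT mod}. Recall Beckner's Pitt inequality for the $n$-dimensional Fourier transform: for $g\in\mathbb S(\mathbb R^n)$ and $0\le\alpha<n$,
\[
\int_{\mathbb R^n}|w|^{-\alpha}|\mathcal F[g](w)|^2\,dw\le C_\alpha\int_{\mathbb R^n}|x|^{\alpha}|g(x)|^2\,dx,
\]
with precisely the constant $C_\alpha=\pi^{\alpha}\left[\Gamma\left(\frac{n-\alpha}{4}\right)/\Gamma\left(\frac{n+\alpha}{4}\right)\right]^2$ appearing in the statement. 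First I would apply this to the auxiliary function $H(x)=\frac{1}{\sqrt{det(B)}}e^{\frac{i}{2}(x^TB^{-1}Ax)}f(x)$ from (\ref{fun H}); since multiplication by the unimodular quadratic chirp preserves the Schwartz class, $H\in\mathbb S(\mathbb R^n)$ whenever $f$ does, so the hypotheses of the Fourier version are met.

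Next I would rewrite each side in terms of $f$ and $\mathcal L_{\bf M}[f]$. On the right-hand side the chirp factor has modulus one, so $|H(x)|^2=\tfrac{1}{|det(B)|}|f(x)|^2$, and the weighted integral of $|H|^2$ collapses to $\tfrac{1}{|det(B)|}\int_{\mathbb R^n}|x|^{\alpha}|f(x)|^2\,dx$. On the left-hand side I would invoke Lemma \ref{lem NLCT-FT mod}, namely $|\mathcal F[H](w)|=|\mathcal L_{\bf M}[f](Bw)|$, to replace the Fourier integrand $|\mathcal F[H](w)|^2$ by $|\mathcal L_{\bf M}[f](Bw)|^2$.

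The decisive step is then the change of variables $w\mapsto B^{-1}w$ in the left-hand integral, which replaces $\mathcal L_{\bf M}[f](Bw)$ by $\mathcal L_{\bf M}[f](w)$, contributes the Jacobian $|det(B)|^{-1}$, and turns the radial weight $|w|^{-\alpha}$ into $|B^{-1}w|^{-\alpha}$. Collecting the Jacobian together with the $|det(B)|^{-1}$ already present on the right-hand side and rescaling the radial weight then produces the stated determinant factor and leaves $C_\alpha$ unchanged. I expect the main obstacle to be exactly the control of $|B^{-1}w|^{-\alpha}$: for a general symplectic block $B$ this is an anisotropic (ellipsoidal) weight rather than a scalar multiple of $|w|^{-\alpha}$, so to recover a clean power of $|det(B)|$ one must work in the homogeneous regime where $B$ is a scalar multiple of an orthogonal matrix, in which case $|B^{-1}w|$ equals a fixed constant times $|w|$ and that constant is a power of $|det(B)|$. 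Carrying out the bookkeeping under this normalization is what pins down the exponent on $|det(B)|$ and completes the estimate.
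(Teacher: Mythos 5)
First, a point of comparison: the paper itself does not prove this lemma at all — it is imported verbatim from \cite{MDL} and used as a black box to derive the ST-NSLCT version. So your attempt has to be judged on its own merits, and on those merits it contains a genuine gap, which you partly acknowledge but do not resolve.

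Your transfer strategy is the natural one, and it runs correctly up to the change of variables: applying Beckner's inequality to $H$ and invoking Lemma \ref{lem NLCT-FT mod}, the two factors of $|\det(B)|^{-1}$ (one from the Jacobian, one from $|H(x)|^2=|\det(B)|^{-1}|f(x)|^2$) cancel \emph{exactly}, leaving
\[
\int_{\mathbb R^n}|B^{-1}w|^{-\alpha}\,|\mathcal L_{\bf M}[f](w)|^2\,dw\;\le\; C_\alpha\int_{\mathbb R^n}|x|^{\alpha}|f(x)|^2\,dx .
\]
The entire determinant factor in the lemma must therefore come from comparing the anisotropic weight $|B^{-1}w|^{-\alpha}$ with $|w|^{-\alpha}$, and this is precisely where the argument breaks. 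For a general invertible block $B$ the best pointwise estimate is $|B^{-1}w|\le\sigma_{\max}(B^{-1})|w|$, i.e. $|B^{-1}w|^{-\alpha}\ge\sigma_{\min}(B)^{\alpha}|w|^{-\alpha}$, which yields the constant $\sigma_{\min}(B)^{-\alpha}$, not $|\det(B)|^{-\alpha}$; the two are incomparable in general (take $B=\mathrm{diag}(\epsilon,1/\epsilon)$ in $n=2$, where $|\det(B)|^{-\alpha}=1$ while $\sigma_{\min}(B)^{-\alpha}=\epsilon^{-\alpha}$ blows up). Your proposed repair — restricting to $B=cQ$ with $Q$ orthogonal — proves a strictly weaker statement than the lemma, which is asserted for \emph{every} free symplectic matrix with $|\det(B)|\ne 0$; and even in that restricted regime the bookkeeping does not produce the claimed exponent: $|B^{-1}w|=c^{-1}|w|$ gives the constant $c^{-\alpha}=|\det(B)|^{-\alpha/n}$, which coincides with $|\det(B)|^{-\alpha}$ only when $n=1$. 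So the step you describe as ``pinning down the exponent on $|\det(B)|$'' in fact pins down a \emph{different} exponent, and the proposal does not establish the lemma as stated. (A by-product of your computation worth noting: this scaling check suggests the constant $|\det(B)|^{-\alpha}$ quoted from \cite{MDL} is itself questionable for $n>1$; but that is an issue with the cited statement, not a justification for the gap in the proof.)
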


Based on Pitt’s inequality for the NSLCT, we obtain Pitt’s
inequality for the ST-NSLCT.
\begin{theorem}[Pitt’s inequality for the ST-NSLCT] Under the assumptions of Lemma \ref{lem pit}, we have
\begin{eqnarray}
&&\int_{\mathbb R^n}\int_{\mathbb R^n}|w|^{-\alpha}|\mathcal V^{\bf M}_{\phi}[f](w,u)|^2dwdu\\\
&&=C_{\alpha}|det(B)|^{-\alpha}\|\phi\|^2_{L^2(\mathbb R^n)}\int_{\mathbb R^n}|x|^{\alpha}|f(x)|^2dx.
\end{eqnarray}
\end{theorem}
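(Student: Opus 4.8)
The plan is to collapse the double $(w,u)$-integral on the left into a one-parameter family of NSLCT Pitt estimates indexed by the translation $u$, and then integrate that parameter out. The structural key is the factorization (\ref{eqn WLCT-LCT}): for each fixed $u$ the ST-NSLCT is exactly the NSLCT of the windowed signal $g_u(x):=f(x)\overline{\phi(x-u)}$, that is $\mathcal V^{\bf M}_\phi[f](\cdot,u)=\mathcal L_{\bf M}[g_u]$. Since $f\in\mathbb S(\mathbb R^n)$ and the window may be taken in the Schwartz class, each slice $g_u$ again lies in $\mathbb S(\mathbb R^n)$, so Lemma \ref{lem pit} applies to every slice without further hypotheses.

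First I would fix $u$ and feed $g_u$ into Lemma \ref{lem pit}, rewriting its right-hand side through the pointwise factorization $|g_u(x)|^2=|f(x)|^2\,|\phi(x-u)|^2$. This equates the inner $w$-integral with a weighted window-modulated quantity, namely
\[
\int_{\mathbb R^n}|w|^{-\alpha}\bigl|\mathcal V^{\bf M}_\phi[f](w,u)\bigr|^2\,dw
= C_{\alpha}|det(B)|^{-\alpha}\int_{\mathbb R^n}|x|^{\alpha}|f(x)|^2|\phi(x-u)|^2\,dx .
\]
I would then integrate this slice identity over $u\in\mathbb R^n$. Because every integrand in sight is nonnegative, Tonelli's theorem licenses interchanging the $x$- and $u$-orders freely, and the $u$-integration touches only the window factor, $\int_{\mathbb R^n}|\phi(x-u)|^2\,du=\|\phi\|^2_{L^2(\mathbb R^n)}$, by translation invariance of Lebesgue measure. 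This norm pulls out of the $x$-integral as a constant and reproduces exactly the claimed relation.

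The main obstacle is the clean passage from the pointwise-in-$u$ statement to the integrated one. Concretely, I must confirm that the iterated integral is finite before invoking Tonelli — which the Schwartz hypothesis secures, since $|x|^{\alpha}|f|^2$ is integrable for $0\le\alpha<n$ while the weight $|w|^{-\alpha}$ remains locally integrable at the origin in dimension $n$ for the same range of $\alpha$ — and I must check that the prefactor $C_{\alpha}|det(B)|^{-\alpha}$ is genuinely independent of the window position $u$, so that it survives the $u$-integration unaltered; this holds because the symplectic data ${\bf M}$, and hence $B$, do not depend on $u$. Once these bookkeeping points and the translation-invariance of the window norm are in hand, stringing the slice relation together with the Tonelli factorization delivers the stated identity.
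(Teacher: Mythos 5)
Your route is exactly the paper's: fix $u$, use (\ref{eqn WLCT-LCT}) to identify $\mathcal V^{\bf M}_{\phi}[f](\cdot,u)$ with $\mathcal L_{\bf M}[f\,\overline{\phi(\cdot-u)}]$, apply Lemma \ref{lem pit} to each slice, then integrate in $u$ using Fubini--Tonelli and $\int_{\mathbb R^n}|\phi(x-u)|^2\,du=\|\phi\|^2_{L^2(\mathbb R^n)}$. But there is one genuine flaw: you assert the slice relation as an \emph{equality},
\begin{equation*}
\int_{\mathbb R^n}|w|^{-\alpha}\left|\mathcal V^{\bf M}_{\phi}[f](w,u)\right|^2dw
= C_{\alpha}|det(B)|^{-\alpha}\int_{\mathbb R^n}|x|^{\alpha}|f(x)|^2|\phi(x-u)|^2dx,
\end{equation*}
whereas Lemma \ref{lem pit} supplies only the one-sided bound $\le$. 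Pitt's inequality is not an identity: for $0<\alpha<n$ it is strict for generic Schwartz functions ($C_{\alpha}$ is merely the sharp constant), so the ``equates'' step cannot be licensed by the lemma, and no argument of this shape can upgrade it to an equality. What your computation actually yields --- and what the paper's own proof derives, line for line the same as yours --- is the inequality
\begin{equation*}
\int_{\mathbb R^n}\int_{\mathbb R^n}|w|^{-\alpha}\left|\mathcal V^{\bf M}_{\phi}[f](w,u)\right|^2dwdu
\le C_{\alpha}|det(B)|^{-\alpha}\|\phi\|^2_{L^2(\mathbb R^n)}\int_{\mathbb R^n}|x|^{\alpha}|f(x)|^2dx;
\end{equation*}
the ``$=$'' in the theorem's display is evidently a typographical slip in the paper, since its proof concludes with ``$\le$''.

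If you replace your slice equality by $\le$, the remainder of your write-up is sound and in places more careful than the paper's: the Tonelli interchange is legitimate because every integrand is nonnegative, the constant $C_{\alpha}|det(B)|^{-\alpha}$ is indeed independent of $u$, and you correctly flag that applying Lemma \ref{lem pit} to $g_u=f\,\overline{\phi(\cdot-u)}$ requires $g_u\in\mathbb S(\mathbb R^n)$, i.e., regularity of $\phi$ beyond mere $L^2$ membership --- a point the paper silently glosses over. Do note, however, that taking $\phi$ in the Schwartz class is an added hypothesis relative to the theorem as stated, which assumes only that $\phi$ is an $L^2$ window.
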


\begin{proof} Let $f^\phi_u(x)=f(x)\overline{\phi(x-u)}$
then by virtue of  (\ref{eqn WLCT-LCT}), we have
\begin{eqnarray*}
\int_{\mathbb R^n}|w|^{-\alpha}|\mathcal V^{\bf M}_{\phi}[f](w,u)|^2dw=\int_{\mathbb R^n}|w|^{-\alpha}|\mathcal L_{\bf M}[f^\phi_u(x)]|^2dw.\\\
\end{eqnarray*}
Now applying Lemma \ref{lem pit} to the L.H.S of above equation, we obtain
\begin{eqnarray}\label{p1}
&&\nonumber\int_{\mathbb R^n}|w|^{-\alpha}|\mathcal V^{\bf M}_{\phi}[f](w,u)|^2dw\\\
\nonumber&&\le C_{\alpha}|det(B)|^{-\alpha}\int_{\mathbb R^n}|x|^{\alpha}|f^\phi_u(x)|^2dx\\\
\nonumber&&\le C_{\alpha}|det(B)|^{-\alpha}\int_{\mathbb R^n}|x|^{\alpha}|f(x)\overline{\phi(x-u)}|^2dx.\\\
\end{eqnarray}
On integrating both sides of (\ref{p1})with respect to $du$, we have
\begin{eqnarray*}
&&\int_{\mathbb R^n}\int_{\mathbb R^n}|w|^{-\alpha}|\mathcal V^{\bf M}_{\phi}[f](w,u)|^2dwdu\\\
&&\le C_{\alpha}|det(B)|^{-\alpha}\int_{\mathbb R^n}\int_{\mathbb R^n}|x|^{\alpha}|f(x)\overline{\phi(x-u)}|^2dxdu\\\
&&=C_{\alpha}|det(B)|^{-\alpha}\|\phi\|^2_{L^2(\mathbb R^n)}\int_{\mathbb R^n}|x|^{\alpha}|f(x)|^2dx,\\\
\end{eqnarray*}
which completes the proof.
\end{proof}

\subsection{Lieb’s UP} Here we shall establish  Lieb’s
Uncertainty Principle for the ST-NSLCT by using the relation between ST-NSLCT with the short-time Fourier transform in $L^2(\mathbb R^n).$

 \begin{theorem}[Lieb's] Let $\phi,f\in L^2(\mathbb R^n)$ and $2\le p<\infty$. Then following inequality holds
 \begin{eqnarray*}
 \int_{\mathbb R^n}\int_{\mathbb R^n}|\mathcal V^{\bf M}_{\phi}[f](w,u)|^pdwdx
 \le\frac{2}{p}\frac{|B|}{{|det(A)|}^{p/2}}(\|f\|_{L^2(\mathbb R^n)}\|\phi\|_{L^2(\mathbb R^n)})^p.\\
 \end{eqnarray*}
 \end{theorem}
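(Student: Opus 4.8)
The plan is to transfer the problem to the classical Lieb inequality for the ordinary short-time Fourier transform, using the bridge already supplied by Lemma \ref{lem WLCT-WFT}. That lemma writes
\[
\mathcal V^{\bf M}_\phi[f](w,u)=e^{i\frac{w^TDB^{-1}w}{2}}\,\mathcal V_\phi[H]\big(B^{-1}w,u\big),\qquad H(x)=\frac{1}{\sqrt{|det(A)|}}e^{\frac{i(x^TB^{-1}Ax)}{2}}f(x),
\]
where $\mathcal V_\phi$ is the $n$-dimensional STFT of Definition \ref{def STFT}. Since the leading factor $e^{iw^TDB^{-1}w/2}$ has modulus one, taking absolute values removes it and gives the pointwise identity $|\mathcal V^{\bf M}_\phi[f](w,u)|=|\mathcal V_\phi[H](B^{-1}w,u)|$. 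The whole theorem will then follow from a change of variables in $w$ and one application of the STFT Lieb inequality.

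Concretely, I would first raise the identity to the $p$-th power and integrate, substituting $v=B^{-1}w$ so that $w=Bv$ and $dw=|det(B)|\,dv$; this yields
\[
\int_{\mathbb R^n}\int_{\mathbb R^n}\big|\mathcal V^{\bf M}_\phi[f](w,u)\big|^p\,dw\,du
=|det(B)|\int_{\mathbb R^n}\int_{\mathbb R^n}\big|\mathcal V_\phi[H](v,u)\big|^p\,dv\,du.
\]
Next I would invoke the classical Lieb inequality for the STFT, valid for $2\le p<\infty$, namely $\int_{\mathbb R^n}\int_{\mathbb R^n}|\mathcal V_\phi[g](v,u)|^p\,dv\,du\le \tfrac{2}{p}\big(\|g\|_{L^2(\mathbb R^n)}\|\phi\|_{L^2(\mathbb R^n)}\big)^p$, applied with $g=H$. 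Finally, because the chirp $e^{ix^TB^{-1}Ax/2}$ has unit modulus, a direct computation gives $\|H\|_{L^2(\mathbb R^n)}^2=\frac{1}{|det(A)|}\int_{\mathbb R^n}|f(x)|^2dx$, i.e. $\|H\|_{L^2(\mathbb R^n)}=|det(A)|^{-1/2}\|f\|_{L^2(\mathbb R^n)}$. Substituting this back collects the Jacobian $|det(B)|$ in the numerator and the factor $|det(A)|^{-p/2}$, producing exactly $\frac{2}{p}\frac{|det(B)|}{|det(A)|^{p/2}}(\|f\|_{L^2(\mathbb R^n)}\|\phi\|_{L^2(\mathbb R^n)})^p$, which is the claimed bound (with $|B|=|det(B)|$).

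The only genuinely nontrivial ingredient is the STFT Lieb inequality itself, which must be available as a cited result; everything on the ST-NSLCT side is bookkeeping. The two places that demand care are the Jacobian from the linear substitution $v=B^{-1}w$ and the normalization of $H$, since these two determinant factors are precisely what distinguish the final constant from the flat STFT constant $\tfrac{2}{p}$. One should also note that the hypothesis $p\ge 2$ is exactly the range in which Lieb's inequality runs in this direction, so no extra restriction beyond the stated one is needed, and the unimodular chirp and exponential prefactor guarantee that passing to moduli loses no information.
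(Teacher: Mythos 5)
Your proposal is correct and takes essentially the same route as the paper's own proof: both arguments rest on Lemma \ref{lem WLCT-WFT}, the classical Lieb inequality for the STFT, the substitution $w\mapsto B^{-1}w$ contributing the Jacobian $|det(B)|$, and the computation $\|H\|_{L^2(\mathbb R^n)}=|det(A)|^{-1/2}\|f\|_{L^2(\mathbb R^n)}$. The only cosmetic difference is the direction of the change of variables (you transform the ST-NSLCT integral onto the STFT side, while the paper substitutes inside the STFT inequality and then multiplies through by $|det(B)|$), so the bookkeeping is identical.
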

 \begin{proof}
 For every $f\in \mathbb S(\mathbb R^n)\subseteq L^2(\mathbb R^n),$ the Leib's inequality for the STFT states that
 \begin{equation}\label{l1}
 \int_{\mathbb R^n}\int_{\mathbb R^n}|\mathcal V_{\phi}[f](w,u)|^pdwdx\le\frac{2}{p}\left(\|f\|_{L^2(\mathbb R^n)}\|\phi\|_{L^2(\mathbb R^n)}\right)^p,
 \end{equation}
 where $\mathcal V_{\phi}[f](w,u)$ denotes the STFT of $f$ given by (\ref{def STFT}) and $\mathbb S(\mathbb R^n)$ is the Schwartz class in $L^2(\mathbb R^n)$. To obtain an analogue of the Leib's inequality for the ST-NSLCT, we replace $f$ in above equation by $H$ [defined in \ref{fun H}], we have
 \begin{eqnarray}
 \nonumber&&\int_{\mathbb R^n}\int_{\mathbb R^n}|\mathcal V_{\phi}[H](w,u)|^pdwdx\\\
 \nonumber&&\le\frac{2}{p}\left(\|H\|_{L^2(\mathbb R^n)}\|\phi\|_{L^2(\mathbb R^n)}\right)^p\\\
 \label{l2}&&=\frac{2}{p}\left(\left(\int_{\mathbb R^n}\left|\frac{1}{\sqrt{|det(A)|}}e^{\frac{i(x^TB^{-1}Ax)}{2}}f(x)\right|^2dx\right)^{1/2}\|\phi\|_{L^2(\mathbb R^n)}\right)^p
\end{eqnarray}
 Setting $w=B^{-1}w$ in (\ref{l2}), we get

 \begin{eqnarray}
 \nonumber &&\int_{\mathbb R^n}\int_{\mathbb R^n}|B^{-1}||\mathcal V_{\phi}[H](B^{-1}w,u)|^pdwdx\\\
\label{l3} &&\le\frac{2}{p}\frac{1}{{|det(A)|}^{p/2}}\left(\left(\int_{\mathbb R^n}\left|f(x)\right|^2dx\right)^{1/2}\|\phi\|_{L^2(\mathbb R^n)}\right)^p.
 \end{eqnarray}

 Now using Lemma \ref{lem WLCT-WFT} to L.H.S of (\ref{l3}), we obtain
 \begin{eqnarray*}
 &&\int_{\mathbb R^n}\int_{\mathbb R^n}|B^{-1}||e^{\frac{-iw^TDB^{-1}w}{2}}\mathcal V^{\bf M}_{\phi}[f](w,u)|^pdwdx\\\
 &&\le\frac{2}{p}\frac{1}{{|det(A)|}^{p/2}}\left(\left(\int_{\mathbb R^n}\left|f(x)\right|^2dx\right)^{1/2}\|\phi\|_{L^2(\mathbb R^n)}\right)^p\\\,
 \end{eqnarray*}

 on further simplifying, we have
 \begin{eqnarray*}
 &&\int_{\mathbb R^n}\int_{\mathbb R^n}|\mathcal V^{\bf M}_{\phi}[f](w,u)|^pdwdx\\\
 &&\le\frac{2|B|}{p}\frac{1}{{|det(A)|}^{p/2}}\left(\left(\int_{\mathbb R^n}\left|f(x)\right|^2dx\right)^{1/2}\|\phi\|_{L^2(\mathbb R^n)}\right)^p\\\
 &&=\frac{2}{p}\frac{|B|}{{|det(A)|}^{p/2}}(\|f\|_{L^2(\mathbb R^n)}\|\phi\|_{L^2(\mathbb R^n)})^p.
 \end{eqnarray*}
 Which completes the proof.

 \end{proof}

\subsection{Heisenberg’s UP}
 In \cite{NDself}, authors introduced Heisenberg’s UP to observe
the lower bound of uncertainty corresponding to the free metaplectic transformation. We extend it to the propose Heisenberg’s UP fot the  ST-NSLCT.

\begin{theorem}[Heisenberg’s UP] Let $\mathcal V^{\bf M}_{\phi}[f]$ be the short time  non-separable linear canonical transform of any
non-trivial function $f\in L^2(\mathbb R^n)$ with respect to a real free symplectic matrix ${\bf M}=(A,B:C,D),$ then following inequality holds:
\begin{eqnarray}
\nonumber&&\left\{\int_{\mathbb R^n}\int_{\mathbb R^n}|w|^2\left|\mathcal V^{\bf M}_{\phi}[f](w,u)\right|^2dw\right\}^{1/2}\left\{\int_{\mathbb R^n}x^2\left|f(x)\right|^2dx\right\}^{1/2}\\
\label{eqn hsb}&&\ge\frac{n\sigma_{min}(B)}{4\pi}\|f\|^2_{L^2(\mathbb R^n)}\|\phi\|_{L^2(\mathbb R^n)}.
\end{eqnarray}
\end{theorem}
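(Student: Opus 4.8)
The plan is to transplant the classical Heisenberg inequality for the Fourier transform to the ST-NSLCT by means of the modulus identity of Lemma \ref{lem NWLCT-FT mod}, and then to average the resulting pointwise estimate over the window-translation variable $u$. First I would freeze $u$ and set $F_u(x)=e^{\frac{i}{2}(x^TB^{-1}Ax)}f(x)\overline{\phi(x-u)}$, which is exactly the auxiliary function in Lemma \ref{lem NWLCT-FT mod}. Writing $P(u)=\int_{\mathbb R^n}|w|^2|\mathcal F[F_u](w)|^2dw$, $Q(u)=\int_{\mathbb R^n}|x|^2|F_u(x)|^2dx$, and $N(u)=\|F_u\|_{L^2}^2$, the classical $n$-dimensional Heisenberg inequality (which in the normalization adopted here carries the constant $\frac{n}{4\pi}$ appearing in the statement) gives $P(u)^{1/2}Q(u)^{1/2}\ge\frac{n}{4\pi}N(u)$. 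Since $|e^{\frac{i}{2}x^TB^{-1}Ax}|=1$, we have immediately $N(u)=\int|f(x)|^2|\phi(x-u)|^2dx$ and $Q(u)=\int|x|^2|f(x)|^2|\phi(x-u)|^2dx$.

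Next I would convert the Fourier-side frequency spread $P(u)$ into an ST-NSLCT frequency spread. By Lemma \ref{lem NWLCT-FT mod}, $|\mathcal F[F_u](w)|=\sqrt{|det(B)|}\,|\mathcal V^{\bf M}_\phi[f](Bw,u)|$, so the substitution $v=Bw$ (with Jacobian $|det(B)|^{-1}$) turns $P(u)$ into $\int|B^{-1}v|^2|\mathcal V^{\bf M}_\phi[f](v,u)|^2dv$. This is the step where the matrix $B$ forces a singular value to appear: using the operator-norm bound $|B^{-1}v|\le\|B^{-1}\|\,|v|=\sigma_{min}(B)^{-1}|v|$, I get $P(u)\le\sigma_{min}(B)^{-2}\int|v|^2|\mathcal V^{\bf M}_\phi[f](v,u)|^2dv$. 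One must keep the direction of the inequality straight here, since we are enlarging the Fourier-side integral, which keeps the Heisenberg lower bound valid.

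Finally I would integrate over $u$. Rearranging the pointwise estimate as $N(u)\le\frac{4\pi}{n}P(u)^{1/2}Q(u)^{1/2}$, integrating in $u$, and applying the Cauchy--Schwarz inequality in $u$ via $\int P^{1/2}Q^{1/2}\,du\le(\int P\,du)^{1/2}(\int Q\,du)^{1/2}$, each averaged quantity is evaluated by Fubini together with $\int_{\mathbb R^n}|\phi(x-u)|^2du=\|\phi\|_{L^2}^2$. This yields $\int N\,du=\|f\|_{L^2}^2\|\phi\|_{L^2}^2$, $\int Q\,du=\|\phi\|_{L^2}^2\int|x|^2|f|^2dx$, and $\int P\,du\le\sigma_{min}(B)^{-2}\int\!\int|w|^2|\mathcal V^{\bf M}_\phi[f]|^2dw\,du$. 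Substituting and dividing through by $\|\phi\|_{L^2}$ collapses everything into the claimed inequality with constant $\frac{n\sigma_{min}(B)}{4\pi}$, and in particular explains why the window norm enters to the first power on the right-hand side.

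I expect the main obstacle to be the two-layer averaging over $u$: the Heisenberg bound is only available pointwise in $u$, so producing a single global inequality of the stated product form requires both the rearrangement into a \emph{linear} (rather than product) estimate before integrating and the Cauchy--Schwarz step afterward. A secondary technical point is the matrix change of variables, where the factor $\sigma_{min}(B)$---rather than $|det(B)|$ or $\sigma_{max}(B)$---arises precisely from estimating $|B^{-1}v|$ by the largest singular value of $B^{-1}$, which equals $\sigma_{min}(B)^{-1}$.
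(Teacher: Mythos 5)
Your proof is correct and reaches exactly the stated constant, but it takes a genuinely different route to the key pointwise estimate. The paper never descends to the classical Fourier--Heisenberg inequality: it starts from the Heisenberg--Pauli--Weyl inequality for the NSLCT itself, quoted from \cite{NDself} with the factor $\sigma_{min}(B)$ already built in, rewrites it via the inversion formula (\ref{inv MLCT}) and Parseval's identity (\ref{eqn par2}) so that $\mathcal L_{\bf M}[f]$ plays the role of a free $L^2$ argument, and then substitutes $\mathcal V^{\bf M}_{\phi}[f](\cdot,u)$ for it --- legitimate precisely because, for fixed $u$, the ST-NSLCT is the NSLCT of the windowed signal $f\,\overline{\phi(\cdot-u)}$ (formula (\ref{eqn WLCT-LCT})). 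From that point on the two arguments coincide step for step: integrate the pointwise-in-$u$ inequality, apply Cauchy--Schwarz and Fubini, use (\ref{f-phi}) and (\ref{con2}) to identify $\int_{\mathbb R^n}|\phi(x-u)|^2du=\|\phi\|^2_{L^2(\mathbb R^n)}$, and divide by $\|\phi\|_{L^2(\mathbb R^n)}$. What your version buys is self-containedness and transparency: by applying the classical FT inequality to $F_u(x)=e^{\frac{i}{2}x^TB^{-1}Ax}f(x)\overline{\phi(x-u)}$, invoking Lemma \ref{lem NWLCT-FT mod}, and changing variables $v=Bw$ (the Jacobian exactly cancelling the $|det(B)|$ coming from the lemma), you exhibit where $\sigma_{min}(B)$ actually comes from, namely the operator-norm bound $|B^{-1}v|\le\sigma_{min}(B)^{-1}|v|$ applied in the direction that enlarges the frequency integral and hence preserves the lower bound; the paper instead inherits this factor as a black box from the cited NSLCT result, which makes its proof shorter but hides the mechanism. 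One small caveat in your write-up: with the paper's kernel normalization $(2\pi)^{-n/2}e^{-iw^Tx}$ the sharp Fourier--Heisenberg constant is $n/2$ rather than $n/(4\pi)$; since $n/(4\pi)<n/2$ the inequality you invoke is still true (merely non-sharp), and it is the one consistent with the paper's own quoted NSLCT constant, so nothing in your argument breaks.
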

\begin{proof}
 For any $f\in L^2(\mathbb R^n)$ the  Heisenberg–Pauli–Weyl uncertainty inequality for NSLCT domain is given by \cite{NDself}:
\begin{equation*}
\int_{\mathbb R^n}x^2|f(x)|^2dx\int_{\mathbb R^n}|w|^2|\mathcal L_{\bf M}[f](w)|^2dw\ge\frac{n^2\sigma^2_{min}(B)}{16\pi^2}\left\{\int_{\mathbb R^n}|f(x)|^2dx\right\}^2
\end{equation*}
Using the inversion of NSLCT (\ref{inv MLCT}) into LHS and (\ref{eqn par2}) to the RHS, above inequality becomes
\begin{eqnarray*}
&&\int_{\mathbb R^n}x^2\left|\mathcal L_{\bf M^{-1}}\{\mathcal L_{\bf M}[f](w)\}\right|^2dx\int_{\mathbb R^n}|w|^2\left|\mathcal L_{\bf M}[f](w)\right|^2dw\\\
&&\ge\frac{n^2\sigma^2_{min}(B)}{16\pi^2}\left\{\int_{\mathbb R^n}\left|\mathcal L_{\bf M}[f](w)\right|^2dw\right\}^2\\\
\end{eqnarray*}
since $\mathcal V^{\bf M}_{\phi}[f]\in L^2(\mathbb R^n),$  therefore replacing $\mathcal L_{\bf M}[f](w)$ by $\mathcal V^{\bf M}_{\phi}[f]\in L^2(\mathbb R^n)$, we obtain
\begin{eqnarray*}
&&\int_{\mathbb R^n}x^2\left|\mathcal L_{\bf M^{-1}}\{\mathcal V^{\bf M}_{\phi}[f](w,u)\}\right|^2dx\int_{\mathbb R^n}|w|^2\left|\mathcal V^{\bf M}_{\phi}[f](w,u)\right|^2dw\\\
&&\ge\frac{n^2\sigma^2_{min}(B)}{16\pi^2}\left\{\int_{\mathbb R^n}\left|\mathcal V^{\bf M}_{\phi}[f](w,u)\right|^2dw\right\}^2\\\
\end{eqnarray*}
First taking square of above equation and then integrating both sides with respect $du$, we obtain
\begin{eqnarray*}
&&\int_{\mathbb R^n}\left\{\int_{\mathbb R^n}x^2\left|\mathcal L_{\bf M^{-1}}\{\mathcal V^{\bf M}_{\phi}[f](w,u)\}\right|^2dx\right\}^{1/2}\left\{\int_{\mathbb R^n}|w|^2\left|\mathcal V^{\bf M}_{\phi}[f](w,u)\right|^2dw\right\}^{1/2}du\\\
&&\ge\frac{n\sigma_{min}(B)}{4\pi}\int_{\mathbb R^n}\int_{\mathbb R^n}\left|\mathcal V^{\bf M}_{\phi}[f](w,u)\right|^2dwdu\\\
\end{eqnarray*}
As a consequence of the Cauchy–Schwartz’s inequality and Fubini theorem, above inequality yields
\begin{eqnarray*}
&&\left\{\int_{\mathbb R^n}\int_{\mathbb R^n}x^2\left|\mathcal L_{\bf M^{-1}}\{\mathcal V^{\bf M}_{\phi}[f](w,u)\}\right|^2dxdu\right\}^{1/2}\left\{\int_{\mathbb R^n}\int_{\mathbb R^n}|w|^2\left|\mathcal V^{\bf M}_{\phi}[f](w,u)\right|^2dw\right\}^{1/2}\\\
&&\ge\frac{n\sigma_{min}(B)}{4\pi}\int_{\mathbb R^n}\int_{\mathbb R^n}\left|\mathcal V^{\bf M}_{\phi}[f](w,u)\right|^2dwdu\\\
\end{eqnarray*}
Now using (\ref{f-phi}) in L.H.S and (\ref{con2}) in R.H.S of above equation, we have
\begin{eqnarray*}
&&\left\{\int_{\mathbb R^n}\int_{\mathbb R^n}x^2\left|f(x)\overline{\phi(x-u)}\right|^2dxdu\right\}^{1/2}\left\{\int_{\mathbb R^n}\int_{\mathbb R^n}|w|^2\left|\mathcal V^{\bf M}_{\phi}[f](w,u)\right|^2dw\right\}^{1/2}\\\
&&\ge\frac{n\sigma_{min}(B)}{4\pi}\|\phi\|^2_{L^2(\mathbb R^n)}\|f\|^2_{L^2(\mathbb R^n)}.\\\
\end{eqnarray*}
 Simplifying above, we obtain
\begin{eqnarray*}
&&\left\{\|\phi\|^2_{L^2(\mathbb R^n)}\int_{\mathbb R^n}x^2\left|f(x)\right|^2dx\right\}^{1/2}\left\{\int_{\mathbb R^n}\int_{\mathbb R^n}|w|^2\left|\mathcal V^{\bf M}_{\phi}[f](w,u)\right|^2dw\right\}^{1/2}\\\
&&\ge\frac{n\sigma_{min}(B)}{4\pi}\|\phi\|^2_{L^2(\mathbb R^n)}\|f\|^2_{L^2(\mathbb R^n)},\\\
\end{eqnarray*}
dividing both sides by $\|\phi\|_{L^2(\mathbb R^n)},$ we will get the desired result.
\end{proof}

\begin{theorem}[Hausdorff–Young] For
 Let $1\le p\le 2$ and $q$ be such that $\frac{1}{p}+\frac{1}{q}=1.$
Then for $\phi\in L^q(\mathbb R^n)$ and $f\in L^p(\mathbb R^n),$ following inequality holds\\\
\begin{equation}
\|\mathcal V^{\bf M}_{\phi}[f](w,u)\|_{L^q(\mathbb R^n)}\le\|\phi\|_{L^q(\mathbb R^n)}\|f\|_{L^p(\mathbb R^n)}.
\end{equation}
\end{theorem}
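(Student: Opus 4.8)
The plan is to push the problem down to the classical Fourier domain, where the Hausdorff--Young inequality is available, and then to treat the frequency variable $w$ and the sliding variable $u$ by two different inequalities. I read the norm on the left as being taken over the full phase space $\mathbb R^n\times\mathbb R^n$ in $(w,u)$, since otherwise the product $\|\phi\|_{L^q}\|f\|_{L^p}$ on the right cannot be produced.

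First I would fix $u$ and regard $\mathcal V^{\bf M}_\phi[f](\cdot,u)$ as the NSLCT of $g_u(x)=f(x)\overline{\phi(x-u)}$, as recorded in (\ref{eqn WLCT-LCT}). Applying Lemma \ref{lem NWLCT-FT mod} with $F_u(x)=e^{\frac{i(x^TB^{-1}Ax)}{2}}f(x)\overline{\phi(x-u)}$ gives the pointwise modulus identity $|\mathcal V^{\bf M}_\phi[f](Bw,u)|=|det(B)|^{-1/2}|\mathcal F[F_u](w)|$, and crucially $|F_u(x)|=|f(x)|\,|\phi(x-u)|$ because the chirp factor has unit modulus. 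Feeding $F_u$ into the classical Fourier Hausdorff--Young inequality $\|\mathcal F[F_u]\|_{L^q}\le\|F_u\|_{L^p}$ and undoing the substitution $w\mapsto B^{-1}w$ in the $w$-integral produces the estimate, pointwise in $u$,
\begin{equation*}
\Big(\int_{\mathbb R^n}|\mathcal V^{\bf M}_\phi[f](w,u)|^q\,dw\Big)^{1/q}\le\Big(\int_{\mathbb R^n}|f(x)|^p|\phi(x-u)|^p\,dx\Big)^{1/p}.
\end{equation*}

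Next I would raise this to the $q$-th power and integrate in $u$. The right-hand side becomes $\int_{\mathbb R^n}G(u)^{q/p}\,du$ with $G(u)=\int_{\mathbb R^n}|f(x)|^p|\phi(x-u)|^p\,dx$, which is precisely the convolution $(|f|^p\ast|\widetilde{\phi}|^p)(u)$, where $\widetilde{\phi}(x)=\phi(-x)$. Since $|f|^p\in L^1(\mathbb R^n)$ with $\||f|^p\|_{1}=\|f\|_p^p$ and $|\phi|^p\in L^{q/p}(\mathbb R^n)$ with $\||\phi|^p\|_{q/p}=\|\phi\|_q^p$, Young's convolution inequality with the admissible triple $1+\tfrac{p}{q}=1+\tfrac{1}{q/p}$ gives $\|G\|_{q/p}\le\|f\|_p^p\|\phi\|_q^p$, hence $\int_{\mathbb R^n}G(u)^{q/p}\,du\le(\|f\|_p\|\phi\|_q)^q$. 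Taking the $q$-th root then yields the claimed bound $\|\mathcal V^{\bf M}_\phi[f]\|_{L^q}\le\|\phi\|_{L^q(\mathbb R^n)}\|f\|_{L^p(\mathbb R^n)}$.

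The step requiring the most care is the coupling of the two variables: Hausdorff--Young cannot be applied simultaneously in $w$ and $u$, so the argument must be split into Hausdorff--Young in $w$ for frozen $u$ followed by Young's convolution inequality in $u$, with Fubini/Minkowski justifying the interchange of the $q$-th power with the $u$-integration. A secondary but genuine subtlety is the bookkeeping of normalizing factors: the substitution $w\mapsto B^{-1}w$ contributes a Jacobian $|det(B)|$ that only partially offsets the factor $|det(B)|^{-1/2}$ from Lemma \ref{lem NWLCT-FT mod}, leaving an overall constant of the form $|det(B)|^{1/q-1/2}$ (together with the Babenko--Beckner constant in the Fourier Hausdorff--Young inequality). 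This constant equals $1$ exactly in the Plancherel case $q=2$, so the constant-free form stated here should be understood modulo absorbing these normalizing factors into the chosen convention for $\mathcal F$.
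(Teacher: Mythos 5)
Your proof is correct (modulo the normalization constants you yourself flag at the end), but it follows a genuinely different route from the paper's. The paper never uses Young's convolution inequality: it first records the NSLCT Hausdorff--Young bound $\|\mathcal L_{\bf M}[f]\|_{L^q}\le\|f\|_{L^p}$, then verifies two endpoint estimates for the windowed transform --- $p=1$ (an $L^1\to L^\infty$ bound coming from the boundedness of the kernel) and $p=2$ (an $L^2\to L^2$ bound coming from Moyal's formula) --- for a window normalized so that $\|\phi\|_{L^q}=1$, invokes the Riesz--Thorin interpolation theorem to cover the intermediate exponents, and finally removes the normalization through the anti-linearity relation $\mathcal V^{\bf M}_{\phi/\|\phi\|_{L^q}}[f]=\|\phi\|_{L^q}^{-1}\mathcal V^{\bf M}_{\phi}[f]$. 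Your route --- Lemma~\ref{lem NWLCT-FT mod} plus classical Hausdorff--Young in $w$ at frozen $u$, followed by Young's inequality for the convolution $|f|^p\ast|\widetilde{\phi}|^p$ in $u$ --- buys something real: it produces the factor $\|\phi\|_{L^q}$ directly, whereas in the paper's scheme the two endpoints are controlled by two \emph{different} norms of the fixed window ($\|\phi\|_{L^\infty}$ at $p=1$ and $\|\phi\|_{L^2}$ at $p=2$), so a single application of Riesz--Thorin with $\phi$ fixed actually yields the constant $\|\phi\|_{L^\infty}^{2/p-1}\|\phi\|_{L^2}^{2-2/p}$, which by log-convexity of the $L^r$ norms dominates $\|\phi\|_{L^q}$; the paper's normalization $\|\phi\|_{L^q}=1$ cannot make both endpoint constants equal to one simultaneously, so your two-step argument is the cleaner derivation of the bound as stated. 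What the paper's approach buys is brevity (given the NSLCT inequality as a black box) and a transparent view of the result as an interpolation phenomenon. Both arguments share the loose bookkeeping of kernel constants: carried out exactly, either route produces the factor $\left((2\pi)^n|\det(B)|\right)^{1/2-1/p}$, which is at most $1$ only when $(2\pi)^n|\det(B)|\ge 1$, so your closing caveat about absorbing normalizing factors applies equally to the paper's own statement and proof.
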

\begin{proof}
 The definition of NSLCT \ref{def mul lct} and procedure defined in \cite{bah2}[Theorem 5.1] together led to
\begin{equation}\label{h1}
\|\mathcal L_{\bf M}[f]\|_{L^q(\mathbb R^n)}\le \|f\|_{L^p(\mathbb R^n)}.
\end{equation}
For  $p=1$, we get
\begin{equation}\label{h2}
\|\mathcal L_{\bf M}[f]\|_{L^\infty(\mathbb R^n)}\le \|f\|_{L^1(\mathbb R^n)}.
\end{equation}
Now using (\ref{h2}) and taking $\|\phi\|_{L^q(\mathbb R^n)}=1$, equation (\ref{eqn WLCT-LCT}) yields
\begin{eqnarray*}
\|\mathcal V^{\bf M}_{\phi}[f](w,u)\|_{L^\infty(\mathbb R^n)}&=&\|\mathcal L_{\bf M}\{f(x)\overline{\phi(x-u)}\}\|_{L^\infty(\mathbb R^n)}\\
&\le&\|\{f(x)\overline{\phi(x-u)}\}\|_{L^1(\mathbb R^n)}\\
&\le&\|f\|_{L^1(\mathbb R^n)}\|{\phi}\|_{L^\infty(\mathbb R^n)}\\
&=&\|f\|_{L^1(\mathbb R^n)}.
\end{eqnarray*}
Now for $p=2$, we obtain
\begin{equation*}
\|\mathcal V^{\bf M}_{\phi}[f](w,u)\|_{L^2(\mathbb R^n)}\le\|f\|_{L^2(\mathbb R^n)}.
\end{equation*}
By Riesz–Thorin interpolation theorem, above yields

\begin{equation}\label{h36}
\|\mathcal V^{\bf M}_{\phi}[f](w,u)\|_{L^q(\mathbb R^n)}\le\|f\|_{L^p(\mathbb R^n)}.
\end{equation}
Now setting \quad$g=\frac{\phi}{\|\phi\|_{L^q(\mathbb R^n)}},$ where $\phi$ is a window function in $L^p(\mathbb R^n),$ we have by anti-linearity property of NSLCT
\begin{equation}\label{h37}
 \mathcal V^{\bf M}_{g}[f](w,u)=\frac{1}{\|\phi\|_{L^q(\mathbb R^n}}\mathcal V^{\bf M}_{\phi}[f](w,u),
 \end{equation}
 on taking $\phi=g$ in (\ref{h36}), we obtain
 \begin{equation}\label{h38}
\|\mathcal V^{\bf M}_{g}[f](w,u)\|_{L^q(\mathbb R^n)}\le\|f\|_{L^p(\mathbb R^n)}
\end{equation}
 which on simplification becomes
 \begin{equation*}
\|\mathcal V^{\bf M}_{\phi}[f](w,u)\|_{L^q(\mathbb R^n)}\le\|\phi\|_{L^q(\mathbb R^n)}\|f\|_{L^p(\mathbb R^n)}.
\end{equation*}
Hence completes the proof.
 \end{proof}

 \subsection{Logarithmic UP}

In 1995
\cite{WO36} W. Beckner  introduced Logarithmic uncertainty principle. In this subsection we obtain the concept of Logarithmic
uncertainty principle for the Short-time non-separable linear canonical transform as follows:\\

\begin{theorem}[Logarithmic uncertainty principle ]Let $\phi$ be a window function in $L^2(\mathbb R^n)$ and let $\mathcal V^{\bf M}_{\phi}[f](w,u)\in  \mathbb S(\mathbb R^n), $ then the short-time
non-separable linear canonical transform satisfies the
following logarithmic uncertainty inequality:
 \begin{eqnarray*}
\nonumber&&\int_{\mathbb R^n}\int_{\mathbb R^n}\ln|wB^{-T}||\mathcal V^{\bf M}_{\phi}[f](w,u)|^2dwdu+\|\phi\|^2_{L^2(\mathbb R^n)}\int_{\mathbb R^n}\ln|x||f(x)|^2dx\\\
\label{eqn log}&&\ge\left(\frac{\Gamma'(n/2)}{\Gamma(n/2)}-\ln\pi\right)\|\phi\|^2_{L^2(\mathbb R^n)}\|f\|^2_{L^2(\mathbb R^n)}.
\end{eqnarray*}
\end{theorem}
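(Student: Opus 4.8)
The plan is to deduce the logarithmic inequality from the classical Beckner logarithmic uncertainty principle for the $n$-dimensional Fourier transform, transported to the ST-NSLCT side through the pointwise identity of Lemma \ref{lem NWLCT-FT mod}. Recall that, in the Fourier normalization used in this paper, the known logarithmic UP for the FT asserts that for every $g\in\mathbb S(\mathbb R^n)$
\[
\int_{\mathbb R^n}\ln|x|\,|g(x)|^2dx+\int_{\mathbb R^n}\ln|w|\,|\mathcal F[g](w)|^2dw\ge C\,\|g\|^2_{L^2(\mathbb R^n)},\qquad C=\frac{\Gamma'(n/2)}{\Gamma(n/2)}-\ln\pi .
\]
First I would freeze $u\in\mathbb R^n$ and apply this to $g=F$, where $F(x)=e^{\frac{i(x^TB^{-1}Ax)}{2}}f(x)\overline{\phi(x-u)}$ is exactly the function appearing in Lemma \ref{lem NWLCT-FT mod}. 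Since the chirp factor has unit modulus, $|F(x)|=|f(x)|\,|\phi(x-u)|$, so the first integral becomes $\int_{\mathbb R^n}\ln|x|\,|f(x)|^2|\phi(x-u)|^2dx$ and the right-hand norm becomes $\|F\|^2_{L^2(\mathbb R^n)}=\int_{\mathbb R^n}|f(x)|^2|\phi(x-u)|^2dx$.

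The substantive step is the frequency integral. Using $|\mathcal F[F](w)|=\sqrt{|\det(B)|}\,|\mathcal V^{\bf M}_\phi[f](Bw,u)|$ from Lemma \ref{lem NWLCT-FT mod} and changing variables $\omega=Bw$ (so $dw=|\det(B)|^{-1}d\omega$), the factor $|\det(B)|$ coming from $|\mathcal F[F](w)|^2$ cancels against the Jacobian, giving
\[
\int_{\mathbb R^n}\ln|w|\,|\mathcal F[F](w)|^2dw=\int_{\mathbb R^n}\ln|B^{-1}\omega|\,|\mathcal V^{\bf M}_\phi[f](\omega,u)|^2d\omega=\int_{\mathbb R^n}\ln|wB^{-T}|\,|\mathcal V^{\bf M}_\phi[f](w,u)|^2dw,
\]
where I used $|B^{-1}w|=|wB^{-T}|$ for the Euclidean norm. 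Hence, for each fixed $u$, Beckner's inequality takes the form $\int_{\mathbb R^n}\ln|x|\,|f(x)|^2|\phi(x-u)|^2dx+\int_{\mathbb R^n}\ln|wB^{-T}|\,|\mathcal V^{\bf M}_\phi[f](w,u)|^2dw\ge C\int_{\mathbb R^n}|f(x)|^2|\phi(x-u)|^2dx$.

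Finally I would integrate this inequality in $u$ over $\mathbb R^n$. By Fubini's theorem and translation invariance of Lebesgue measure, $\int_{\mathbb R^n}|\phi(x-u)|^2du=\|\phi\|^2_{L^2(\mathbb R^n)}$, so the $u$-integral of the space term collapses to $\|\phi\|^2_{L^2(\mathbb R^n)}\int_{\mathbb R^n}\ln|x|\,|f(x)|^2dx$, the right-hand side collapses to $C\,\|\phi\|^2_{L^2(\mathbb R^n)}\|f\|^2_{L^2(\mathbb R^n)}$, and the frequency term becomes precisely the double integral in the statement, yielding the claimed inequality. The only delicate points are the legitimacy of Fubini and of the change of variables: here the standing hypotheses $f,\phi\in\mathbb S(\mathbb R^n)$ and $\mathcal V^{\bf M}_\phi[f]\in\mathbb S(\mathbb R^n)$ are what guarantee absolute integrability of the logarithmically weighted integrands, so the order of integration may be exchanged. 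I expect this measure-theoretic bookkeeping to be the main (though routine) obstacle; notably, the weight $\ln|wB^{-T}|$ in the statement is exactly what absorbs the matrix $B^{-1}$ produced by the substitution $\omega=Bw$, so no auxiliary estimate on $B$ is needed.
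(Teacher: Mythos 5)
Your proof is correct, but it takes a genuinely different route from the paper's. The paper starts from the logarithmic uncertainty principle already established for the NSLCT in \cite{NDself}: it applies the NSLCT inversion formula on the spatial term and Parseval's identity on the right-hand side, then \emph{replaces} $\mathcal L_{\bf M}[f](w)$ by $\mathcal V^{\bf M}_{\phi}[f](w,u)$ (both assumed to lie in $\mathbb S(\mathbb R^n)$), identifies $\mathcal L_{\bf M^{-1}}\{\mathcal V^{\bf M}_{\phi}[f](\cdot,u)\}(x)=f(x)\overline{\phi(x-u)}$ via (\ref{f-phi}), and finally integrates in $u$, using Fubini and the Moyal consequence (\ref{con2}). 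You instead work at the level of the classical Fourier transform: you apply Beckner's inequality to the chirped, windowed function $F(x)=e^{\frac{i(x^TB^{-1}Ax)}{2}}f(x)\overline{\phi(x-u)}$ of Lemma \ref{lem NWLCT-FT mod}, use the modulus identity $|\mathcal F[F](w)|=\sqrt{|\det(B)|}\,|\mathcal V^{\bf M}_{\phi}[f](Bw,u)|$, and let the Jacobian of the substitution $\omega=Bw$ cancel the factor $|\det(B)|$, which converts the weight $\ln|w|$ into $\ln|B^{-1}w|=\ln|wB^{-T}|$; the $u$-integration step is then the same as the paper's. Both arguments establish the same per-$u$ inequality before integrating, so they are equivalent in substance; yours is more self-contained and arguably cleaner (it needs only the Fourier-domain Beckner inequality plus Lemma \ref{lem NWLCT-FT mod}, makes the determinant bookkeeping explicit, and avoids the paper's somewhat formal substitution of $\mathcal V^{\bf M}_{\phi}[f]$ for $\mathcal L_{\bf M}[f]$), while the paper's route is shorter if one takes the NSLCT logarithmic UP as a black box and stays entirely inside the metaplectic framework. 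One minor caveat: your argument requires $f$ and $\phi$ to be regular enough (e.g., Schwartz) so that $F\in\mathbb S(\mathbb R^n)$ for each fixed $u$, which is slightly stronger than the theorem's stated hypothesis that $\mathcal V^{\bf M}_{\phi}[f]\in\mathbb S(\mathbb R^n)$; the paper's own proof glosses over the analogous point, so this is not a substantive defect.
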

\begin{proof}
For any $f\in \mathbb S(\mathbb R^n)$ the  logarithmic uncertainty principle for the non-separable linear canonical transform domain is given by \cite{NDself}:
\begin{eqnarray*}
\nonumber&&\int_{\mathbb R^n}\ln|x||f(x)|^2dx+\int_{\mathbb R^n}\ln|wB^{-T}||\mathcal L_{\bf M}[f](w)|^2dw\\\
\label{log}&&\ge\left(\frac{\Gamma'(n/2)}{\Gamma(n/2)}-\ln\pi\right)\int_{\mathbb R^n}|f(x)|^2dx.
\end{eqnarray*}
Now invoking the inversion formula of the non-separable linear canonical transform on the L.H.S and Parseval's formula  on R.H.S, we obtain
\begin{eqnarray*}
\nonumber&&\int_{\mathbb R^n}\ln|x||\mathcal L_{\bf M^{-1}}\{\mathcal L_{\bf M}[f](w)\}|^2dx+\int_{\mathbb R^n}\ln|wB^{-T}||\mathcal L_{\bf M}[f](w)|^2dw\\\
\nonumber&&\ge\left(\frac{\Gamma'(n/2)}{\Gamma(n/2)}-\ln\pi\right)\int_{\mathbb R^n}|\mathcal L_{\bf M}[f](w)|^2dw.
\end{eqnarray*}
Since both $\mathcal L_{\bf M}[f](w)$ and $\mathcal V^{\bf M}_{\phi}[f](w,u)$ are in $\mathbb S(\mathbb R^n)$ thus we can replace $\mathcal L_{\bf M}[f](w)$ by $\mathcal V^{\bf M}_{\phi}[f](w,u)$
on the both sides of above, to get
\begin{eqnarray*}
\nonumber&&\int_{\mathbb R^n}\ln|x||\mathcal L_{\bf M^{-1}}\{\mathcal V^{\bf M}_{\phi}[f](w,u)\}|^2dx+\int_{\mathbb R^n}\ln|wB^{-T}||\mathcal V^{\bf M}_{\phi}[f](w,u)|^2dw\\\
\nonumber&&\ge\left(\frac{\Gamma'(n/2)}{\Gamma(n/2)}-\ln\pi\right)\int_{\mathbb R^n}|\mathcal V^{\bf M}_{\phi}[f](w,u)|^2dw.
\end{eqnarray*}
Integrating above inequality with respect $du$ on both sides and then by virtue of Fubini's theorem, we have

\begin{eqnarray*}
\nonumber&&\int_{\mathbb R^n}\int_{\mathbb R^n}\ln|x||\mathcal L_{\bf M^{-1}}\{\mathcal V^{\bf M}_{\phi}[f](w,u)\}|^2dxdu+\int_{\mathbb R^n}\int_{\mathbb R^n}\ln|wB^{-T}||\mathcal V^{\bf M}_{\phi}[f](w,u)|^2dwdu\\\
\nonumber&&\ge\left(\frac{\Gamma'(n/2)}{\Gamma(n/2)}-\ln\pi\right)\int_{\mathbb R^n}\int_{\mathbb R^n}|\mathcal V^{\bf M}_{\phi}[f](w,u)|^2dwdu.
\end{eqnarray*}
Now using (\ref{f-phi}) in L.H.S and (\ref{con2}) in R.H.S of above inequality, we have
\begin{eqnarray*}
\nonumber&&\int_{\mathbb R^n}\int_{\mathbb R^n}\ln|wB^{-T}||\mathcal V^{\bf M}_{\phi}[f](w,u)|^2dwdu+\int_{\mathbb R^n}\int_{\mathbb R^n}\ln|x||f(x)\phi(x-u)|^2dxdu\\\
\nonumber&&\ge\left(\frac{\Gamma'(n/2)}{\Gamma(n/2)}-\ln\pi\right)\|\phi\|^2_{L^2(\mathbb R^n)}\|f\|^2_{L^2(\mathbb R^n)}.
\end{eqnarray*}
Further simplifying left-hand side, we obtain the desired result as
\begin{eqnarray*}
\nonumber&&\int_{\mathbb R^n}\int_{\mathbb R^n}\ln|wB^{-T}||\mathcal V^{\bf M}_{\phi}[f](w,u)|^2dwdu+\|\phi\|^2_{L^2(\mathbb R^n)}\int_{\mathbb R^n}\ln|x||f(x)|^2dx\\\
\nonumber&&\ge\left(\frac{\Gamma'(n/2)}{\Gamma(n/2)}-\ln\pi\right)\|\phi\|^2_{L^2(\mathbb R^n)}\|f\|^2_{L^2(\mathbb R^n)}.
\end{eqnarray*}

\end{proof}
\subsection{Hardy’s UP}
G.H. Hardy first introduced the Hardy’s uncertainty principle in 1933 \cite{ND19}. Hardy’s uncertainty principle says that it
is impossible for a function and its Fourier transform to decrease very rapidly simultaneously. Hardy’s UP in the Fourier transform domain \cite{ND20} was
given as follows.

\begin{lemma}[Hardy’s UP in the Fourier transform \cite{ND20}]\label{hardy FT}
If $f(x)=\mathcal O(e^{-|x|^2/\beta^2})$, $\mathcal F[f](w)=\mathcal O((2\pi)^{n/2}e^{-16\pi^2|w|^2/\alpha^2})$  and $1/\alpha\beta>1/4, $ then $f\equiv 0.$ If $1/\alpha\beta=1/4,$ then
\begin{equation*}
f=Ce^{-|x|^2/\beta^2}.
\end{equation*}
Where $C$ is a constant in $\mathbb C$
\end{lemma}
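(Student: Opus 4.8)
The plan is to prove Hardy's theorem by the classical complex-analytic route, first reducing the $n$-dimensional statement to the one-dimensional case and then exploiting the analytic continuation of the Fourier transform. For the reduction I would use the Fourier slice (projection) theorem: for a unit vector $\theta\in\mathbb R^n$ let $P_\theta f(t)=\int_{\langle x,\theta\rangle=t}f$ be the projection of $f$ onto the line through $\theta$. Since $f(x)=\mathcal O(e^{-|x|^2/\beta^2})$ and $|x|$ is rotation invariant, writing $x=t\theta+y$ with $y\perp\theta$ shows that each projection inherits the bound $P_\theta f(t)=\mathcal O(e^{-t^2/\beta^2})$, while the slice identity $\widehat{P_\theta f}(s)=c\,\mathcal F[f](s\theta)$ transfers the decay of $\mathcal F[f]$ (again using $|s\theta|=|s|$) to $\widehat{P_\theta f}$. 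Thus it suffices to prove the one-dimensional statement for each $P_\theta f$ and then recover $f$ from its projections by injectivity of the Radon transform.

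For the one-dimensional core, let $g=P_\theta f$ with $g(t)=\mathcal O(e^{-a t^2})$ and $\widehat g(s)=\mathcal O(e^{-b s^2})$, where $a,b$ encode $\beta,\alpha$ and $ab$ is to be compared with the critical value fixed by the chosen Fourier normalization. The Gaussian decay of $g$ lets me define $F(z)=\int_{\mathbb R}g(t)e^{-itz}\,dt$ for all $z\in\mathbb C$; since the integrand is entire in $z$ and the integral converges uniformly on compacta, $F$ is entire, and the elementary bound $|F(z)|\le\int|g(t)|e^{t\,|\mathrm{Im}\,z|}\,dt$ yields the order-two growth estimate $|F(z)|\le C\,e^{(\mathrm{Im}\,z)^2/(4a)}$. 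On the real axis $F$ agrees up to a constant with $\widehat g$, so $|F(\xi)|\le C'e^{-b\xi^2}$.

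In the critical case I would examine $G(z)=F(z)\,e^{bz^2}$ with $b=1/(4a)$: on the real axis $|G|$ is bounded because the two factors cancel, and on the imaginary axis $z=i\eta$ the growth estimate for $F$ together with $b=1/(4a)$ again produces cancellation, so $G$ is bounded on both axes. Since $G$ is entire of order two, applying the Phragm\'en--Lindel\"of principle on each of the four quadrants (opening angle $\pi/2$, exactly the borderline exponent for order two) forces $G$ to be bounded on all of $\mathbb C$, whence Liouville's theorem makes $G$ constant and $F(z)=C\,e^{-bz^2}$; inverting shows $g$ is the Gaussian $C'e^{-at^2}$. In the strict case I would instead use an exponent $b'<b$ sitting exactly at the critical value, so that $G(z)=F(z)e^{b'z^2}$ tends to $0$ along the real axis; the same Phragm\'en--Lindel\"of and Liouville argument then yields $G\equiv0$, hence $g\equiv0$. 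Passing back through the projections, the strict case gives $P_\theta f\equiv0$ for every $\theta$ and thus $f\equiv0$, while the critical case gives $P_\theta f(t)=C_\theta e^{-t^2/\beta^2}$ with $C_\theta$ necessarily independent of $\theta$ (equate the total masses $\int P_\theta f$, all equal to $\int f$), so injectivity of the Radon transform identifies $f$ with $C e^{-|x|^2/\beta^2}$.

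The delicate point I expect to be the main obstacle is the Phragm\'en--Lindel\"of step: the entire function has order exactly two and the sectors have opening $\pi/2$, which is precisely the threshold where the maximum principle for sectors is sharp, so one must verify the hypotheses carefully (the boundary bound is genuinely uniform and the interior growth is no worse than $e^{c|z|^2}$) rather than invoke a crude version. A secondary technical nuisance is bookkeeping the normalizing constants $(2\pi)^{n/2}$ and $16\pi^2$ and the conversion between the decay rates $\alpha,\beta$ and the analytic exponents $a,b$, and confirming that the projection--slice reduction preserves the critical product $1/(\alpha\beta)$ exactly.
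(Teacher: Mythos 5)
The first thing to note is that the paper contains \emph{no proof} of this lemma: it is quoted as a known classical result (Hardy's theorem, cited there to the literature) and then used as a black box to derive the ST-NSLCT version, so your attempt can only be compared with the standard classical proof --- which is in fact the route you chose (projection--slice reduction to one dimension, then complex analysis). Your peripheral steps are sound: the Gaussian bound does pass to every projection $P_\theta f$, the slice identity does transfer the frequency decay, Radon injectivity does recover $f$, the mass argument does force the constants $C_\theta$ to agree, and the supercritical case does reduce to the critical one via $b'<b$. The genuine gap is precisely the step you flag as the ``main obstacle'' and then leave unresolved: Phragm\'en--Lindel\"of on a quadrant is \emph{false} for functions of order two. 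Concretely, writing $z=\xi+i\eta$, the function $h(z)=e^{-iz^2}$ is entire of order two with $|h(z)|=e^{2\xi\eta}$, hence $|h|\equiv 1$ on both coordinate axes, yet $h$ is unbounded in the open first quadrant. So ``bounded on the two axes, order two, apply PL on each quadrant, conclude bounded'' is not a theorem, and no careful verification of the hypotheses of the crude sector version can rescue it: those hypotheses are genuinely insufficient at the borderline opening $\pi/2$.

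The repair --- which is what the classical proof actually does --- uses an estimate your own computation already yields but which you never invoke. From $|F(z)|\le Ce^{b\eta^2}$ (with $b=1/(4a)$) and $|e^{bz^2}|=e^{b(\xi^2-\eta^2)}$, the product $G=Fe^{bz^2}$ satisfies the \emph{anisotropic} global bound $|G(z)|\le Ce^{b\xi^2}$; note that the counterexample $e^{-iz^2}$ violates exactly this bound, which is why it is the hypothesis that makes the argument work. Given it, fix $\epsilon>0$ and consider $G_\epsilon(z)=G(z)e^{i\epsilon z^2}$ in the first quadrant; on the ray $z=re^{i\phi}$ its modulus is at most $C\exp\bigl(r^2\cos\phi\,(b\cos\phi-2\epsilon\sin\phi)\bigr)$. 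On rays with $\tan\phi\ge b/(2\epsilon)$ the exponent is nonpositive, so $G_\epsilon$ is bounded there outright; on the complementary sector $0\le\phi\le\arctan\bigl(b/(2\epsilon)\bigr)$, whose opening is \emph{strictly} less than $\pi/2$, Phragm\'en--Lindel\"of does apply to the order-two function $G_\epsilon$ and bounds it by its boundary constant. Hence $|G(z)|\le Ce^{2\epsilon\xi\eta}$ on the whole quadrant for every $\epsilon>0$, and letting $\epsilon\to 0$ gives the boundedness needed before Liouville (the other quadrants are symmetric, using $e^{\mp i\epsilon z^2}$). Without this device --- or an equivalent one, such as the even/odd reduction to order-one functions --- the central step of your proof is unsupported. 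One further remark on what you called a secondary nuisance: with the paper's own convention $\mathcal F[f](w)=(2\pi)^{-n/2}\int f(x)e^{-iw^Tx}dx$, the stated constants cannot be confirmed at all, because the transform of $e^{-|x|^2/\beta^2}$ decays like $e^{-\beta^2|w|^2/4}$, whereas the stated critical pair ($16\pi^2/\alpha^2$ with $1/\alpha\beta=1/4$) demands the strictly faster decay $e^{-\pi^2\beta^2|w|^2}$; those constants belong to the convention $\int f(x)e^{-2\pi iw^Tx}dx$. Your bookkeeping step would therefore not ``confirm'' the constants but reveal that the statement must be rewritten consistently in one fixed normalization.
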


Based on Lemma \ref{hardy FT}, we derive the corresponding Hardy’s UP for the ST-NSLCT.

\begin{theorem}[Hardy’s UP in the ST-NSLCT]
If $f(x)=\mathcal O(e^{-|x|^2/\beta^2})$, $\mathcal F[f](w)=\mathcal O((2\pi)^{n/2}e^{-16\pi^2|B^{-1}w|^2/\alpha^2})$  and $1/\alpha\beta>1/4, $ then $f\equiv 0.$ If $1/\alpha\beta=1/4,$ then
\begin{equation*}
f=Ce^{-|x|^2/\beta^2-\frac{i(x^TB^{-1}Ax)}{2}}.
\end{equation*}
Where $C$ is a constant in $\mathbb C$
\end{theorem}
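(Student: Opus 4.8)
The plan is to reduce the statement to the scalar Hardy uncertainty principle for the Fourier transform, Lemma \ref{hardy FT}, by transferring the problem onto the chirp-modulated auxiliary function
$$H_0(x)=e^{\frac{i(x^TB^{-1}Ax)}{2}}f(x),$$
which is exactly the function introduced in the Remark following Lemma \ref{lem NLCT-FT mod}. The tell-tale sign that this is the right route is the phase factor $e^{-\frac{i(x^TB^{-1}Ax)}{2}}$ in the asserted extremizer: it is precisely what reappears when one applies the Gaussian conclusion of the Fourier Hardy principle to $H_0$ and then divides out the chirp. Thus the whole argument amounts to checking that $H_0$ obeys the two Gaussian-domination hypotheses of Lemma \ref{hardy FT} and then pulling its conclusion back to $f$.

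First I would establish the spatial decay. Because the chirp $e^{\frac{i(x^TB^{-1}Ax)}{2}}$ is unimodular, $|H_0(x)|=|f(x)|=\mathcal O(e^{-|x|^2/\beta^2})$, so $H_0$ inherits the first hypothesis with the identical parameter $\beta$. Next I would deal with the frequency decay by invoking the bridge identity (\ref{eqn NLCT-FT mod 0}),
$$|\mathcal F[H_0](w)|=\sqrt{det(B)}\,|\mathcal L_{\bf M}[f](Bw)|.$$
Substituting the frequency-side decay hypothesis on the right and carrying out the change of variable $w\mapsto Bw$, the key cancellation $|B^{-1}(Bw)|=|w|$ turns the weighted exponent $e^{-16\pi^2|B^{-1}w|^2/\alpha^2}$ into the plain Gaussian $e^{-16\pi^2|w|^2/\alpha^2}$, giving $\mathcal F[H_0](w)=\mathcal O\big((2\pi)^{n/2}e^{-16\pi^2|w|^2/\alpha^2}\big)$ with the \emph{same} rate $\alpha$. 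The constants $\sqrt{det(B)}$ and $(2\pi)^{n/2}$ are immaterial, being swallowed by the $\mathcal O$-constant.

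Having verified both hypotheses of Lemma \ref{hardy FT} for $H_0$ with the unchanged product $1/\alpha\beta$, I would apply it directly. When $1/\alpha\beta>1/4$ it forces $H_0\equiv0$, and since $|f|=|H_0|$ we get $f\equiv0$. When $1/\alpha\beta=1/4$ it yields $H_0(x)=Ce^{-|x|^2/\beta^2}$ for some $C\in\mathbb C$; dividing by the unimodular chirp then produces
$$f(x)=Ce^{-|x|^2/\beta^2}e^{-\frac{i(x^TB^{-1}Ax)}{2}}=Ce^{-|x|^2/\beta^2-\frac{i(x^TB^{-1}Ax)}{2}},$$
which is exactly the claimed critical-case form.

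The two decay verifications are routine; the one step calling for care, and the place I expect the main obstacle, is the change of variable $w\mapsto Bw$ in the frequency estimate. One must confirm that the argument scaling by $B$ built into the identity (\ref{eqn NLCT-FT mod 0}) is cancelled precisely by the $B^{-1}$ sitting inside the exponent of the frequency hypothesis, so that the Gaussian rate $\alpha$ is preserved intact and is not deformed by factors such as $det(B)$ or $\sigma_{min}(B)$. It is exactly this matching of the $B$ and $B^{-1}$ that renders the stated hypothesis natural in the NSLCT setting and that supplies the standard Fourier Gaussian bound demanded by Lemma \ref{hardy FT}.
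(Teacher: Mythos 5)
Your proposal is correct under its reading of the hypotheses, and its skeleton --- chirp the signal, verify the two Gaussian bounds via a modulus identity, invoke Lemma \ref{hardy FT}, then divide out the unimodular chirp --- is the same as the paper's; but the object you chirp is genuinely different, and this changes what is actually proved. The paper does not use the window-free function $H_0$ and the NSLCT bridge (\ref{eqn NLCT-FT mod 0}); it uses $F(x)=e^{\frac{i(x^TB^{-1}Ax)}{2}}f(x)\overline{\phi(x-u)}$ from Lemma \ref{lem NWLCT-FT mod}, gets the spatial bound by setting $u=x$ (so the window collapses to the constant $\overline{\phi(0)}$), reads the frequency hypothesis as a decay condition on the short-time transform through $|\mathcal F[F](w)|=\sqrt{|det(B)|}\,|\mathcal V^{\bf M}_\phi[f](Bw,u)|$, and absorbs $1/\overline{\phi(0)}$ into $C$ in the critical case. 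In other words, the paper interprets the (evidently misprinted) hypothesis on $\mathcal F[f]$ as a bound on $\mathcal V^{\bf M}_\phi[f]$, whereas you interpret it as a bound on $\mathcal L_{\bf M}[f]$; under your reading the window $\phi$ and the ST-NSLCT never appear, so what you establish is Hardy's theorem for the NSLCT --- essentially the known result of \cite{NDself} --- rather than a statement about the short-time transform that is the nominal subject of the theorem and of the section.

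That said, your route is internally more consistent than the paper's at the two delicate points. First, the paper's device of putting $u=x$ clashes with Lemma \ref{lem NWLCT-FT mod}, which requires $u$ to be a fixed parameter while the Fourier transform is taken in $x$; once $u$ is tied to $x$, the identity $|\mathcal F[F](w)|=\sqrt{|det(B)|}\,|\mathcal V^{\bf M}_\phi[f](Bw,u)|$ no longer applies as stated, a difficulty your window-free $H_0$ avoids entirely. Second, the paper leaves the bound for $|\mathcal F[F](w)|$ written as $\mathcal O\big((2\pi)^{n/2}e^{-16\pi^2|B^{-1}w|^2/\alpha^2}\big)$, which is not the plain Gaussian that Lemma \ref{hardy FT} demands; the cancellation $|B^{-1}(Bw)|=|w|$ that you flag as the crucial step is exactly what is needed there, and you carry it out correctly. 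What the paper's route would buy, if repaired (fixed $u$, and $\phi(0)\neq 0$ or $\phi$ nonvanishing assumed so that $F\equiv 0$ forces $f\equiv 0$), is a conclusion that genuinely constrains the windowed transform $\mathcal V^{\bf M}_\phi[f]$; what yours buys is a clean, gap-free argument for the NSLCT version.
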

\begin{proof}
Consider then function $F(x)=e^{\frac{i(x^TB^{-1}Ax)}{2}}f(x)\overline{\phi(x-u)}$.\\ On setting $u=x$ then  we have
\begin{equation*}
|F(x)|=|f(x)||\overline{\phi(0)}|=\mathcal O(e^{-|x|^2/\beta^2})
\end{equation*}

Also by Lemma \ref{lem NWLCT-FT mod}, we have
\begin{eqnarray*}
|\mathcal F[F](w)|&=&\sqrt{det(B)}|\mathcal V^{\bf M}_{\phi}[f](Bw,u)|\\
&=&\mathcal O((2\pi)^{n/2}e^{-16\pi^2|B^{-1}w|^2/\alpha^2}).
\end{eqnarray*}
Following from Lemma \ref{hardy FT}:\\\
 If $1/\alpha\beta>1/4, $ we have $F\equiv 0$ which implies $f\equiv 0$.\\\
Also,if $1/\alpha\beta=1/4,$ then $$F(x)=Ce^{-|x|^2/\beta^2}$$ implies $$f(x)=Ce^{-|x|^2/\beta^2-\frac{i(x^TB^{-1}Ax)}{2}}.$$
\end{proof}
This completes the proof.

\subsection{Beurling’s UP}
Beurling’s uncertainty principle is a more general version of Hardy’s uncertainty principle, which is given by A.Beurling. It implies the weak form of Hardy’s UP immediately.
Beurling’s UP in the Fourier transform domain is as follows

\begin{lemma}[Beurling’s UP in the Fourier transform domain\cite{ND21}]\label{beur ucp FT} Let $f\in L^2(\mathbb R^n)$ and $d\ge 0$ satisfy
\begin{equation*}
\int_{\mathbb R^n}\int_{\mathbb R^n}\dfrac{|f(x)||\mathcal F[f](w)|}{(1+\|x\|+\|w\|)^d}e^{2\pi|\langle x,w\rangle|}dxdw\le\infty.
\end{equation*}
then $$ f(x)=P(x)e^{-\pi\langle Ax,x\rangle},$$
where $A$ is a real positive definite symmetric matrix and $P(x)$ is a polynomial of degree $<\frac{d-n}{2}.$
\end{lemma}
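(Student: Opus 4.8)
The plan is to recognize the statement as the classical Beurling--H\"ormander theorem in its $n$-dimensional, polynomially weighted form, and to prove it through complex-analytic methods (entire-function theory) rather than by any direct real-variable manipulation of the integral. The global strategy has three stages: reduce to one dimension, promote the integrability hypothesis into an analytic extension of $\mathcal F[f]$, and then classify the resulting entire function.

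First I would reduce the $n$-dimensional problem to a one-dimensional one. Fixing directions and passing to the one-variable sections (or partial Fourier transforms) of $f$, the finiteness of the double integral controls, for almost every line through the origin, a one-variable Beurling-type integral. The delicate bookkeeping here is how the weight $(1+\|x\|+\|w\|)^{-d}$ is distributed among the coordinates: it must be arranged so that each one-dimensional subproblem inherits a polynomial weight whose exponent is exactly what will force the final degree bound $\deg P<\tfrac{d-n}{2}$. An efficient variant of this reduction expands $f$ in Hermite functions and tracks how the weight $d$ caps the number of surviving terms.

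Second, in one dimension I would use Fubini on $\int_{\mathbb R}\int_{\mathbb R}\frac{|f(x)||\mathcal F[f](y)|}{(1+|x|+|y|)^d}e^{2\pi|xy|}\,dx\,dy<\infty$ to conclude that, for a set of $y$ of positive measure, $\int_{\mathbb R}|f(x)|e^{2\pi|x||y|}\,dx<\infty$. This exponential integrability forces $\mathcal F[f]$ to admit an analytic continuation to an entire function of order at most $2$, with Gaussian-type growth control in the imaginary directions supplied by the factor $e^{2\pi|xy|}$ and at most polynomial growth, after removal of a Gaussian, along the real axis.

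The crux, and the step I expect to be the main obstacle, is the entire-function classification together with the exact degree accounting. The goal is to show that an entire function of order $2$ that is dominated by (polynomial)$\times$(Gaussian) on $\mathbb R$ and obeys the exponential-type bound from the weight must be \emph{precisely} of the form $P(z)e^{-cz^2}$ with $c>0$; this requires a Hadamard-factorization or Phragm\'en--Lindel\"of argument that extracts a genuine Gaussian factor rather than mere Gaussian decay. Transporting this back through Fourier inversion and the reduction of the first stage then yields $f(x)=P(x)e^{-\pi\langle Ax,x\rangle}$ with $A$ real, symmetric, positive definite, while matching the weight exponent $d$ to the admissible degree of $P$ gives $\deg P<\tfrac{d-n}{2}$. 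Coordinating the $n$-dimensional degree bookkeeping with the one-dimensional classification is the most error-prone part of the argument.
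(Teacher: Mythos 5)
The paper never proves this lemma: it is imported, with citation, from Bonami--Demange--Jaming (Rev.\ Mat.\ Iberoam.\ 2003), where it is the polynomially weighted $n$-dimensional form of the Beurling--H\"ormander theorem, and it is then used as a black box to derive the Beurling UP for the ST-NSLCT. So there is no in-paper argument to compare yours against; the only meaningful benchmark is the cited source, and your outline does follow the same broad route as that source: reduction toward one dimension, analytic continuation of $\mathcal F[f]$ forced by the exponential weight, and classification of the resulting entire function by Hadamard factorization and Phragm\'en--Lindel\"of.

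Judged as a proof, however, your text is a plan rather than an argument, and it contains a concrete error in addition to the gaps you admit. The Fubini step is misstated: finiteness of the double integral gives, for almost every $w$, finiteness of $|\mathcal F[f](w)|\int_{\mathbb R^n}|f(x)|\,e^{2\pi|\langle x,w\rangle|}(1+\|x\|+\|w\|)^{-d}\,dx$, which carries no information at points where $\mathcal F[f](w)=0$. To launch the analytic continuation you need such control on an \emph{unbounded} set of $w$, so you must first rule out that $\mathcal F[f]$ is essentially supported in a bounded set; in H\"ormander's and Bonami--Demange--Jaming's arguments this is a separate step (exponential decay of $f$ makes $\mathcal F[f]$ analytic in a strip, which is incompatible with compact support unless $f\equiv 0$), and it is absent from your sketch. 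Beyond that, the two items you yourself flag as the crux --- extracting a genuine factorization $P(z)e^{-cz^{2}}$ rather than mere Gaussian-dominated growth, and the $n$-dimensional bookkeeping that yields the sharp bound $\deg P<\frac{d-n}{2}$ --- are precisely the substance of the theorem; naming them as obstacles does not discharge them. In short: you have correctly identified what the lemma is and where its difficulty lives, which is more than the paper itself does, but the proposal is not a proof.
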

According to Lemma \ref{beur ucp FT}, we derive Beurling’s uncertainty principle for the ST-NSLCT.
\begin{theorem}[Beurling’s UP ]\label{beur ucp} Let $f,\phi\in L^2(\mathbb R^n)$ where $\phi$ is a nonzero window function  and  $\mathcal V^{\bf M}_{\phi}[f]\in L^2(\mathbb R^n)$  satisfy
\begin{equation*}
\int_{\mathbb R^n}\int_{\mathbb R^n}\dfrac{|f(x)||\overline{\phi(x-u)}||\mathcal V^{\bf M}_{\phi}[f](w,u)|}{\sqrt{|det(B)|}(1+\|x\|+\|B^{-1}w\|)^d}e^{2\pi|\langle x,B^{-1}w\rangle|}dxdw <\infty,\quad where\quad d\ge 0.\\\
\end{equation*}
Then $$ f(x)=P(x)/\overline{\phi(x-u)}e^{\frac{-i(x^TB^{-1}Ax)}{2}-\pi\langle Ax,x\rangle},$$
where $A$ is a real positive definite symmetric matrix and $P(x)$ is a polynomial of degree $<\frac{d-n}{2}.$
\end{theorem}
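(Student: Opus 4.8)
The plan is to reduce the statement to the Fourier-domain Beurling inequality of Lemma~\ref{beur ucp FT}, transported to the auxiliary function $F$ already introduced in Lemma~\ref{lem NWLCT-FT mod}, exactly mirroring the Hardy's UP argument given just above. Holding the window parameter $u$ fixed, I would set
\[
F(x)=e^{\frac{i(x^TB^{-1}Ax)}{2}}f(x)\overline{\phi(x-u)},
\]
so that $F$ is a genuine function of $x$ alone. Two elementary identities link $F$ to the data: since the phase factor has unit modulus, $|F(x)|=|f(x)|\,|\overline{\phi(x-u)}|$, and by Lemma~\ref{lem NWLCT-FT mod} we have $|\mathcal F[F](w)|=\sqrt{|det(B)|}\,|\mathcal V^{\bf M}_{\phi}[f](Bw,u)|$. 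These are the only bridges between the ST-NSLCT and the ordinary Fourier transform of $F$.

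The crux is to show that the finiteness hypothesis of the theorem is nothing but the finiteness hypothesis of Lemma~\ref{beur ucp FT} for $F$, after a change of variables. Starting from the double integral in the statement, I would substitute $w=Bv$ (equivalently $v=B^{-1}w$), whose Jacobian is $|det(B)|$. Under this substitution $\|B^{-1}w\|$ becomes $\|v\|$, the exponent $\langle x,B^{-1}w\rangle$ becomes $\langle x,v\rangle$, and $\mathcal V^{\bf M}_{\phi}[f](w,u)$ becomes $\mathcal V^{\bf M}_{\phi}[f](Bv,u)$. The normalizing factor then reorganizes as $\frac{1}{\sqrt{|det(B)|}}\cdot|det(B)|=\sqrt{|det(B)|}$, and this $\sqrt{|det(B)|}$ is precisely what the modulus identity of Lemma~\ref{lem NWLCT-FT mod} needs to turn $\sqrt{|det(B)|}\,|\mathcal V^{\bf M}_{\phi}[f](Bv,u)|$ into $|\mathcal F[F](v)|$. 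Combined with $|f(x)|\,|\overline{\phi(x-u)}|=|F(x)|$, the integral becomes exactly
\[
\int_{\mathbb R^n}\int_{\mathbb R^n}\frac{|F(x)|\,|\mathcal F[F](v)|}{(1+\|x\|+\|v\|)^d}\,e^{2\pi|\langle x,v\rangle|}\,dx\,dv,
\]
which is finite by hypothesis.

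Once this is in place, Lemma~\ref{beur ucp FT} applies verbatim to $F$, yielding $F(x)=P(x)e^{-\pi\langle Ax,x\rangle}$ for a real positive-definite symmetric matrix $A$ and a polynomial $P$ of degree $<\frac{d-n}{2}$. I would finish by unwinding the definition of $F$: dividing through by $e^{\frac{i(x^TB^{-1}Ax)}{2}}\overline{\phi(x-u)}$ gives
\[
f(x)=\frac{P(x)}{\overline{\phi(x-u)}}\,e^{-\frac{i(x^TB^{-1}Ax)}{2}-\pi\langle Ax,x\rangle},
\]
the asserted form, with the degree bound on $P$ inherited directly from the lemma.

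The routine steps (the two modulus identities and the final algebraic unwinding) are immediate. The one place that demands care is the Jacobian bookkeeping in the change of variables $w=Bv$: one must confirm that the $|det(B)|$ from the substitution and the $\sqrt{|det(B)|}$ from Lemma~\ref{lem NWLCT-FT mod} combine against the $\frac{1}{\sqrt{|det(B)|}}$ of the hypothesis to collapse cleanly, and one must remember that $u$ stays fixed throughout so that Lemma~\ref{beur ucp FT}, a statement about functions of a single vector variable, is legitimately applicable to $F$. I do not expect any deeper obstacle.
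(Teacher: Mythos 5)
Your proposal is correct and follows essentially the same route as the paper's proof: both define $F(x)=e^{\frac{i(x^TB^{-1}Ax)}{2}}f(x)\overline{\phi(x-u)}$ for fixed $u$, use Lemma~\ref{lem NWLCT-FT mod} together with the change of variables $w=Bv$ (Jacobian $|det(B)|$) to identify the hypothesis integral with the Beurling integral for $F$, and then apply Lemma~\ref{beur ucp FT} and unwind. The only cosmetic difference is that you transform the hypothesis into the lemma's integral while the paper runs the same computation in the opposite direction.
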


\begin{proof}
Consider then function $F(x)=e^{\frac{i(x^TB^{-1}Ax)}{2}}f(x)\overline{\phi(x-u)}$, then

\begin{eqnarray*}
&&\int_{\mathbb R^n}\int_{\mathbb R^n}\dfrac{|F(x)||\mathcal F[F](w)|}{(1+\|x\|+\|w\|)^d}e^{2\pi|\langle x,w\rangle|}dxdw\\\
&&=\int_{\mathbb R^n}\int_{\mathbb R^n}\dfrac{|f(x)||\overline{\phi(x-u)}|\sqrt{|det(B)|}|\mathcal V^{\bf M}_{\phi}[f](Bw,u)|}{(1+\|x\|+\|w\|)^d}e^{2\pi|\langle x,w\rangle|}dxdw\\\
&&=\int_{\mathbb R^n}\int_{\mathbb R^n}\dfrac{|f(x)||\overline{\phi(x-u)}||\mathcal V^{\bf M}_{\phi}[f](w,u)|}{\sqrt{|det(B)|}(1+\|x\|+\|B^{-1}w\|)^d}e^{2\pi|\langle x,B^{-1}w\rangle|}dxdw <\infty.\\\
\end{eqnarray*}
Therefore by Lemma \ref{beur ucp FT}, we have $$ F(x)=P(x)e^{-\pi\langle Ax,x\rangle},$$ where $A$ is a real positive definite symmetric matrix and $P(x)$ is a polynomial of degree $<\frac{d-n}{2}.$
Furthermore $f(x)=P(x)/\overline{\phi(x-u)}e^{\frac{-i(x^TB^{-1}Ax)}{2}-\pi\langle Ax,x\rangle}.$
\end{proof}
This completes the proof.

\subsection{Nazarov’s UP for the ST-NSLCT}

 Nazarov’s UP was first proposed by F.L. Nazarov in 1993 \cite{ND17}. It
measures the localization of a nonzero function by taking into consideration the notion of support of the function instead of the dispersion. In other words it argues what happens if a non-trival function and its Fourier transform are only small outside a compact
set.Let us start with Nazarov’s UP for the Fourier transform.
\begin{lemma}[Nazarov’s UP for the Fourier transform\cite{ND18}]\label{lem naz ft}
There exists a constant $K$, such that for finite Lebesgue measurable sets $S$, $E\subset\mathbb R^n$ and for every $f\in L^2(\mathbb R^n),$ we have
\begin{equation*}
Ke^{K(S,E)}\left(\int_{\mathbb R^n\backslash S}|f(x)|^2dx+\int_{\mathbb R^n\backslash E}|\mathcal F[f](w)|^2dw\right)\ge\int_{\mathbb R^n}|f(x)|^2dx
\end{equation*}
where $K(S,E)=
{Kmin(|S||E|,|S|^{1/n},\mu(E),\mu(S)|E|^{1/n})},$ $\mu(S)$ is the mean width of $S$ and $|S|$ denotes the Lebesgue measure of $S.$
\end{lemma}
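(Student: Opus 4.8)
The plan is to prove this as the quantitative two-set annihilation inequality of Nazarov \cite{ND17}, in the $n$-dimensional form of \cite{ND18}, by reducing the global statement to a one-sided concentration estimate for band-limited functions and then to a Tur\'an--Remez-type local inequality. Throughout write $S^c=\mathbb R^n\setminus S$ and $E^c=\mathbb R^n\setminus E$, let $Q_S$ denote multiplication by the indicator $\chi_S$, and let $P_E=\mathcal F^{-1}\chi_E\mathcal F$ be the frequency projection onto $E$; both are orthogonal projections on $L^2(\mathbb R^n)$ and $\|P_{E^c}f\|_{L^2(\mathbb R^n)}=\|\mathcal F[f]\|_{L^2(E^c)}$ by Plancherel. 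First I would record the elementary splitting $f=Q_{S^c}f+Q_S P_{E^c}f+Q_SP_Ef$, which gives
\[
\|f\|_{L^2(\mathbb R^n)}\le\|f\|_{L^2(S^c)}+\|\mathcal F[f]\|_{L^2(E^c)}+\|Q_SP_Ef\|_{L^2(\mathbb R^n)}.
\]

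Setting $\rho=\|Q_SP_E\|_{op}$ and rearranging, the whole statement reduces to the quantitative operator bound $\rho\le 1-\kappa(S,E)$ with $\kappa(S,E)^{-1}\asymp Ke^{K(S,E)}$, after which the displayed inequality yields $\|f\|_{L^2(\mathbb R^n)}\le(1-\rho)^{-1}\big(\|f\|_{L^2(S^c)}+\|\mathcal F[f]\|_{L^2(E^c)}\big)$. Equivalently, it suffices to prove the one-sided concentration estimate: for every $g\in L^2(\mathbb R^n)$ with $\mathcal F[g]$ supported in $E$ (so $P_Eg=g$),
\begin{equation*}
\int_{\mathbb R^n}|g(x)|^2\,dx\le Ke^{K(S,E)}\int_{\mathbb R^n\setminus S}|g(x)|^2\,dx,\tag{$\star$}
\end{equation*}
since restricting the operator bound to band-limited $g$ produces $(\star)$ and, conversely, $(\star)$ forces $\rho^2\le 1-1/(Ke^{K(S,E)})$.

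The analytic heart, and the step I expect to be the main obstacle, is $(\star)$. In one dimension the idea is that a function band-limited to a set $E$ of finite measure is, on each short interval, closely approximated by an exponential polynomial with controllably few frequencies, so that a Tur\'an/Remez-type inequality bounds its mass over the interval by its mass over any subset of comparable measure, with a constant that is exponential in (interval length)$\times|E|$. Summing these local comparisons over a cover of $\mathbb R$ and weighing the mass that can accumulate on $S$ against the mass forced onto $S^c$ produces the global factor $e^{K|S||E|}$. Making this rigorous is where the work lies: I would invoke the Kovrijkine sharp quantitative form of the Logvinenko--Sereda theorem, i.e. a Remez inequality with the correct exponential dependence, together with Bernstein-type control of the derivatives of band-limited functions to justify replacing $g$ locally by a low-complexity polynomial. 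Getting the exponential constant sharp, rather than merely finite, is the crux.

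Finally I would upgrade $(\star)$ to $\mathbb R^n$ and extract the mean-width terms in $K(S,E)=K\min\{|S|\,|E|,\ |S|^{1/n}\mu(E),\ \mu(S)|E|^{1/n}\}$ (so the comma in the statement should read as a product). The plan is to slice along lines in a fixed direction, apply the one-dimensional estimate on almost every line by Fubini, and integrate; optimizing the resulting bound over the choice of direction is exactly what converts the crude product $|S|\,|E|$ into the geometric quantities, since the average one-dimensional extent of a set in a random direction is controlled by its mean width and by $|\cdot|^{1/n}$. The duality $f\mapsto\mathcal F[f]$, which interchanges the roles of $S$ and $E$, supplies the two complementary terms $|S|^{1/n}\mu(E)$ and $\mu(S)|E|^{1/n}$, and taking the minimum over all three available bounds gives the stated constant $Ke^{K(S,E)}$.
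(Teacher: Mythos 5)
You should first note that the paper contains no proof of this lemma at all: it is imported verbatim, with citation, from Nazarov \cite{ND17} ($n=1$) and Jaming \cite{ND18} (higher dimensions), so there is no internal argument to compare yours against; what you have written is an attempted reconstruction of a genuinely deep literature theorem. The parts of your reconstruction that are routine are correct: the splitting $f=Q_{S^c}f+Q_SP_{E^c}f+Q_SP_Ef$, the equivalence (in the Havin--J\"oricke style) between the strong annihilation inequality and the concentration estimate $(\star)$ for functions with spectrum in $E$, the arithmetic relating $\rho=\|Q_SP_E\|_{op}$ to the constant $Ke^{K(S,E)}$, and your reading of the statement's typo, namely that ``$|S|^{1/n},\mu(E)$'' must be the product $|S|^{1/n}\mu(E)$, consistent with \cite{ND18}.

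The two steps you flag as ``the crux'' are, however, genuinely missing, and in both cases the tool you name would not close them. First, Kovrijkine's quantitative Logvinenko--Sereda theorem does not yield $(\star)$: it applies to functions whose spectrum lies in an interval (or a finite union of intervals) and to relatively dense sampling sets, whereas here $E$ is an arbitrary set of finite measure with no structure. The heuristic that a function band-limited to a finite-measure set is locally ``an exponential polynomial with about $|I|\,|E|$ frequencies'' is precisely the statement whose rigorous implementation constitutes Nazarov's theorem; his proof runs through a Tur\'an-type local estimate for exponential polynomials over measurable subsets together with substantial complex-analytic machinery, not a Bernstein-plus-Remez patching argument, so your sketch defers the main difficulty rather than supplying it. Second, the slicing step fails as described: by the projection--slice relation, the restriction of $f$ to a line in direction $\theta$ has one-dimensional spectrum contained in the orthogonal projection of $E$ onto that line, and the projection of a set of finite $n$-dimensional measure generically has infinite one-dimensional measure, so the one-dimensional theorem cannot be applied on almost every line for arbitrary $E$. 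This is exactly why Jaming's argument in \cite{ND18} brings in convexity and mean-width hypotheses and a careful averaging over directions to produce the terms $|S|^{1/n}\mu(E)$ and $\mu(S)|E|^{1/n}$, and why the term $|S|\,|E|$ in higher dimensions is not obtained by naive Fubini. In short: your reduction framework is sound, but both analytic pillars of the theorem remain unproved in your proposal, and the cited auxiliary results do not substitute for them.
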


Now we shall establish Nazarov’s UP to the short-time non-sperable linear canonical transform domain.
\begin{theorem}[Nazarov’s UP] Let $\mathcal V^{\bf M}_{\phi}[f]$ be the short-stime NSLCT, then under the assumptions of Lemma \ref{lem naz ft} for every $f\in L^2(\mathbb R^n)$, the following inequality holds:
\begin{eqnarray}\label{naz}
\nonumber&&\|\phi\|^2_{L^2(\mathbb R^n)}\int_{\mathbb R^n}|f(x)|^2dx\\\
\nonumber&&\le  Ke^{K(S,E)}\left(\|\phi\|^2_{L^2(\mathbb R^n)}\int_{\mathbb R^n\backslash S}|f(x)|^2dx+\int_{\mathbb R^n}\int_{\mathbb R^n(\backslash E  B)}|\mathcal V^{\bf M}_{\phi}[f](w,u)|^2dwdu\right).
\end{eqnarray}
\end{theorem}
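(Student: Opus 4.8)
The plan is to follow the same template used for the preceding uncertainty principles: apply the Fourier-domain statement (Lemma \ref{lem naz ft}) to the auxiliary function $F(x)=e^{\frac{i(x^TB^{-1}Ax)}{2}}f(x)\overline{\phi(x-u)}$ with $u$ held fixed as a parameter, then translate back to the ST-NSLCT via the modulus identity of Lemma \ref{lem NWLCT-FT mod}, and finally integrate in $u$ using the energy normalization $\int_{\mathbb R^n}|\phi(x-u)|^2\,du=\|\phi\|^2_{L^2(\mathbb R^n)}$.

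First I would record two facts about $F$. Since the chirp factor has unit modulus, $|F(x)|=|f(x)||\overline{\phi(x-u)}|$, so $\int_{\mathbb R^n}|F(x)|^2dx=\int_{\mathbb R^n}|f(x)\overline{\phi(x-u)}|^2dx$, and the same holds over any subdomain such as $\mathbb R^n\backslash S$. On the frequency side, Lemma \ref{lem NWLCT-FT mod} gives $|\mathcal F[F](w)|=\sqrt{|det(B)|}\,|\mathcal V^{\bf M}_\phi[f](Bw,u)|$. Applying Lemma \ref{lem naz ft} to $F$ with the given sets $S,E$ then yields
\begin{equation*}
\int_{\mathbb R^n}|F(x)|^2dx\le Ke^{K(S,E)}\left(\int_{\mathbb R^n\backslash S}|F(x)|^2dx+\int_{\mathbb R^n\backslash E}|\mathcal F[F](w)|^2dw\right).
\end{equation*}

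Substituting the two facts, the frequency integral becomes $|det(B)|\int_{\mathbb R^n\backslash E}|\mathcal V^{\bf M}_\phi[f](Bw,u)|^2dw$. I would then perform the linear substitution $v=Bw$, whose Jacobian $|det(B)|^{-1}$ exactly cancels the prefactor $|det(B)|$ and carries the region $\mathbb R^n\backslash E$ onto $\mathbb R^n\backslash(EB)$. This produces, for each fixed $u$,
\begin{equation*}
\int_{\mathbb R^n}|f(x)\overline{\phi(x-u)}|^2dx\le Ke^{K(S,E)}\left(\int_{\mathbb R^n\backslash S}|f(x)\overline{\phi(x-u)}|^2dx+\int_{\mathbb R^n\backslash(EB)}|\mathcal V^{\bf M}_\phi[f](w,u)|^2dw\right).
\end{equation*}

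Finally I would integrate both sides in $u$ over $\mathbb R^n$ and invoke Fubini. On the left-hand side and on the first right-hand term the $x$- and $u$-integrals separate, and $\int_{\mathbb R^n}|\phi(x-u)|^2du=\|\phi\|^2_{L^2(\mathbb R^n)}$ factors out, producing $\|\phi\|^2_{L^2(\mathbb R^n)}\int_{\mathbb R^n}|f(x)|^2dx$ and $\|\phi\|^2_{L^2(\mathbb R^n)}\int_{\mathbb R^n\backslash S}|f(x)|^2dx$ respectively, while the remaining term is already in the desired double-integral form; this is exactly the claimed inequality. I expect no genuine obstacle here, since this is a routine adaptation of the established pattern. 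The one point requiring care is the frequency substitution, where the Jacobian must be verified to cancel $|det(B)|$ precisely and the image set $\mathbb R^n\backslash(EB)$ identified correctly; note also that the constant $K(S,E)$ is unaffected by the later change of variables, as Lemma \ref{lem naz ft} is invoked for $F$ with the original sets $S$ and $E$.
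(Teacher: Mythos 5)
Your proposal is correct and follows essentially the same route as the paper's own (first) proof: apply the Fourier-domain Nazarov lemma to $F(x)=e^{\frac{i(x^TB^{-1}Ax)}{2}}f(x)\overline{\phi(x-u)}$, convert the frequency term via Lemma \ref{lem NWLCT-FT mod}, absorb $|det(B)|$ through the substitution $w\mapsto Bw$, and integrate in $u$ with Fubini; the only difference is that you change variables before integrating in $u$ while the paper does it afterwards, which is immaterial. (The paper also records an alternative proof that invokes Nazarov's inequality directly in the NSLCT domain via $\mathcal V^{\bf M}_{\phi}[f](w,u)=\mathcal L_{\bf M}\{f(x)\overline{\phi(x-u)}\}(w)$, but your argument matches the primary one.)
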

\begin{proof}
Applying Lemma \ref{lem naz ft} to the function $F(x)\in L^2(\mathbb R^n)$ defined in Lemma \ref{lem NWLCT-FT mod}, we have
\begin{equation*}
\int_{\mathbb R^n}|F(x)|^2dx\le Ke^{K(S,E)}\left(\int_{\mathbb R^n\backslash S}|F(x)|^2dx+\int_{\mathbb R^n\backslash E}|\mathcal F[F](w)|^2dw\right)
\end{equation*}
Now with the help of Lemma \ref{eqn NWLCT-FT mod}, above yields
\begin{eqnarray*}
&&\int_{\mathbb R^n}|f(x)\overline{\phi(x-u)}|^2dx\\\
&&\le Ke^{K(S,E)}\left(\int_{\mathbb R^n\backslash S}|f(x)\overline{\phi(x-u)}|^2dx+\int_{\mathbb R^n\backslash E}|\sqrt{|det(B)|}\mathcal V^{\bf M}_{\phi}(Bw,u)|^2dw\right).
\end{eqnarray*}
  Integrating above equation both sides with respect to $u$, we obtain
  \begin{eqnarray*}
&&\int_{\mathbb R^n}\int_{\mathbb R^n}|f(x)\overline{\phi(x-u)}|^2dxdu\\\
&&\le Ke^{K(S,E)}\int_{\mathbb R^n}\left(\int_{\mathbb R^n\backslash S}|f(x)\overline{\phi(x-u)}|^2dx+\int_{\mathbb R^n\backslash E}|\sqrt{|det(B)|}\mathcal V^{\bf M}_{\phi}(Bw,u)|^2dw\right)du.
\end{eqnarray*}
Implementing Fubini's theorem, we have
\begin{eqnarray*}
&&\|\phi\|^2_{L^2(\mathbb R^n)}\int_{\mathbb R^n}|f(x)|^2dx\\\
&&\le Ke^{K(S,E)}\left(\|\phi\|^2_{L^2(\mathbb R^n)}\int_{\mathbb R^n\backslash S}|f(x)|^2dx+{|det(B)|}\int_{\mathbb R^n}\int_{\mathbb R^n\backslash E}|\mathcal V^{\bf M}_{\phi}(Bw,u)|^2dwdu\right)\\
&&= Ke^{K(S,E)}\left(\|\phi\|^2_{L^2(\mathbb R^n)}\int_{\mathbb R^n\backslash S}|f(x)|^2dx+\int_{\mathbb R^n}\int_{\mathbb R^n(\backslash E  B)}|\mathcal V^{\bf M}_{\phi}(w,u)|^2dwdu\right),
\end{eqnarray*}
which completes the proof.\\\\
\end{proof}

{\bf Alternative proof:}
For any function $f\in L^2(\mathbb R^n)$ and a pair of finite measurable subsets $S$ and
 $E$ of $\mathbb R^n$, Nazarov’s uncertainty principle in the linear canonical domain reads \cite{NDself}
\begin{equation}\label{nn1}
Ke^{K(S,E)}\left(\int_{\mathbb R^n\backslash S}|f(x)|^2dx+\int_{\mathbb R^n\backslash EB}|\mathcal L_{\bf M}[f](w)|^2dw\right)\ge\int_{\mathbb R^n}|f(x)|^2dx
\end{equation}
where $K(S,E)=
{Kmin(|S||E|,|S|^{1/n},\mu(E),\mu(S)|E|^{1/n})},$ $\mu(.)$ is the mean width of measurable subset, and $|.|$ denotes the Lebesgue measure.
Moreover, the relationship
between the ST-NSLCT and the NSLCT is given by
\begin{equation}\label{nn2}
\mathcal V^{\bf M}_{\phi}[f](w,u)=\mathcal L_{\bf M}\{f(x)\overline{\phi(x-u)}\}(w).
\end{equation}
Since $f(x)\overline{\phi(x-u)}\in L^2(\mathbb R^n)$ then by virtue of (\ref{nn1}), we have
\begin{eqnarray*}
&&\int_{\mathbb R^n}|f(x)\overline{\phi(x-u)}|^2dx\\\
&&\le Ke^{K(S,E)}\left(\int_{\mathbb R^n\backslash S}|f(x)\overline{\phi(x-u)}|^2dx+\int_{\mathbb R^n\backslash EB}|\mathcal L_{\bf M}[f(x)\overline{\phi(x-u)}](w)|^2dw\right)
\end{eqnarray*}
Using (\ref{nn2}), above yields
\begin{eqnarray*}
&&\int_{\mathbb R^n}|f(x)\overline{\phi(x-u)}|^2dx\\\
&&\le Ke^{K(S,E)}\left(\int_{\mathbb R^n\backslash S}|f(x)\overline{\phi(x-u)}|^2dx+\int_{\mathbb R^n\backslash EB}|\mathcal V^{\bf M}_{\phi}[f](w,u)|^2dw\right).
\end{eqnarray*}
Integrating above equation both sides with respect to $u$, we obtain
\begin{eqnarray*}
&&\int_{\mathbb R^n}\int_{\mathbb R^n}|f(x)\overline{\phi(x-u)}|^2dxdu\\\
&&\le Ke^{K(S,E)}\int_{\mathbb R^n}\left(\int_{\mathbb R^n\backslash S}|f(x)\overline{\phi(x-u)}|^2dx+\int_{\mathbb R^n\backslash EB}|\mathcal V^{\bf M}_{\phi}[f](w,u)|^2dw\right)du.
\end{eqnarray*}
On implementing the well known Fubini theorem in above equation, we obtain the desired result as
\begin{eqnarray*}
&&\|\phi\|^2_{L^2(\mathbb R^n)}\int_{\mathbb R^n}|f(x)|^2dx\\\
&&\le Ke^{K(S,E)}\left(\|\phi\|^2_{L^2(\mathbb R^n)}\int_{\mathbb R^n\backslash S}|f(x)|^2dx+\int_{\mathbb R^n}\int_{\mathbb R^n\backslash EB}|\mathcal V^{\bf M}_{\phi}(w,u)|^2dwdu\right)\\
\end{eqnarray*}
Which completes the proof.

\section{Conclusion}
In this paper we  presented a novel concept of short-time non-separable linear canonical transform. Based on the properties of ST-NSLCT and NSLCT, the relationship between these two notations are presented. Important properties such as   boundedness, reconstruction formula, Moyals formula and  are derived. Finally, we extend some different uncertainty principles (UP) from quantum mechanics
including Lieb's, Pitt's UP, Heisenberg’s uncertainty principle, Hausdorff-Young,  Hardy’s
UP, Beurling’s UP, Logarithmic UP, and Nazarov’s UP which have already been well studied in the
ST-NSLC domain.

\end{document}